\preto\subequations{\ifhmode\unskip\fi}
\newcommand\restartchapters{\par
  \setcounter{chapter}{0}%
  \setcounter{section}{0}%
  \gdef\@chapapp{\chaptername}%
  \gdef\thechapter{\@arabic\c@chapter}}
\newtheorem{remark}{\bf Remark}
\newtheorem{lemma}{\bf Lemma}
\newtheorem{proposition}{\bf Proposition}
\newcommand{\st}{\mathrm{s.t.} }
\DeclareMathOperator{\tr}{tr}
\DeclareMathOperator{\rank}{rank}
\begin{document}
\bstctlcite{IEEEexample:BSTcontrol}

\title{Enhancing PHY Security of Cooperative Cognitive Radio Multicast Communications}
\author{
\IEEEauthorblockN{Van-Dinh Nguyen, \textit{Student Member, IEEE,} Trung Q. Duong, \textit{Senior Member, IEEE,} Oh-Soon Shin, \textit{Member, IEEE,}	 Arumugam Nallanathan, \textit{Fellow, IEEE,} and 	George K. Karagiannidis, \textit{Fellow, IEEE}   } \\
 \thanks{V.-D.~Nguyen and O.-S.~Shin are with the School of Electronic Engineering and the Department of  ICMC Convergence Technology, Soongsil University, Seoul 06978, Korea (e-mail: \{nguyenvandinh, osshin\}@ssu.ac.kr).}
\thanks{T.~Q.~Duong is with the School of Electronics, Electrical Engineering and Computer Science, Queen's University Belfast, Belfast BT7 1NN, United Kingdom (e-mail: trung.q.duong@qub.ac.uk).}
\thanks{A.~Nallanathan is  with the Centre for Telecommunications Research, King's College London, London WC2R 2LS, U.K. (e-mail: arumugam.nallanathan@kcl.ac.uk).
}
\thanks{G.~K.~Karagiannidis is with the Department of Electrical and Computer Engineering, Aristotle University of Thessaloniki, Thessaloniki 54 124, Greece (e-mail: geokarag@auth.gr).
}
\thanks{Part of this work was presented at the  2017 IEEE International Conference on Communications (ICC) \cite{NguyenICC}.}
}
\maketitle
\vspace{-1.8cm}
\begin{abstract}
  In this paper, we propose a cooperative approach to improve the security of both primary and secondary systems in cognitive radio multicast communications. During their access to the frequency spectrum licensed to the primary  users, the secondary unlicensed users  assist the primary system in fortifying security by sending a jamming noise to the eavesdroppers, while  simultaneously protect themselves from eavesdropping.  The main objective of this work is to maximize the secrecy rate of the secondary system, while  adhering to all individual primary users' secrecy rate constraints.	In the case of active eavesdroppers and perfect channel state information (CSI)  at the transceivers,  the utility function of interest is nonconcave and the involved constraints are nonconvex, and thus, the optimal solutions are troublesome.  To solve this problem, we propose an iterative algorithm to arrive  at least to a local optimum of the original nonconvex problem. This algorithm is guaranteed to achieve a Karush-Kuhn-Tucker solution. Then, we extend the optimization approach to  the case of passive eavesdroppers and imperfect CSI knowledge at the transceivers, where the constraints  are transformed  into a linear matrix inequality and convex constraints, in order to facilitate the optimal solution.	
\end{abstract}
\begin{IEEEkeywords}
 Cognitive radio,   convex optimization, interference, jamming noise, secrecy capacity, multicast transmission.  
\end{IEEEkeywords}

\section{Introduction} \label{Introduction}

Traditionally, a secrecy mechanism is applied at the higher layers of a communication system by using a secret key exchange between the source and the destination, such as the Diffie-Hellman key exchange \cite{Diffie}. However, the execution of  key exchange algorithms  over wireless networks may be vulnerable to eavesdropping attacks, due to the broadcasting nature of the wireless transmission media. As a result,  research in information theory for wireless communication systems has focused on achieving secrecy, by using channel coding and signal processing techniques at the physical  layer (PHY)  \cite{Wyner, Cheong}.  Specifically, the pioneering work \cite{Wyner} introduced   PHY security via  wiretap channels,  by providing  perfect secrecy that can be attained when the eavesdropper channel is a degraded version of the main source-to-destination channel. 

Recently, PHY security for wireless communications has become  an important research area. The underlying idea is to guarantee a positive secrecy rate of legitimate users by exploiting the random characteristics of  wireless channel. In particular,  the authors in  \cite{Gopala} proposed a low-complexity on/off power allocation strategy to attain secrecy  under the assumption of full channel state information (CSI). The use of cooperative jamming noise (JN) was proposed in  \cite{Tekin}, where users who are prevented from transmitting according to a certain policy  block the eavesdropper and thereby assist the remaining users. {\color{black}In \cite{AnandTIFS10}, the authors analyzed the optimal location of an eavesdropper which results in zero secrecy capacity of all links, where the location is defined logically in terms of channel gains}. From a  quality-of-service (QoS) perspective, a secret transmit beamforming approach  was considered in  \cite{Mukherjee_1,Mukherjee_2,Liao}, in order to predetermine the signal-to-interference-plus-noise-ratio (SINR) target at the destination and/or at the eavesdropper. More recently, a jamming noise technique  (a.k.a. artificial noise)  was introduced, in order to improve the  secrecy capacity by confusing the decoding capability of the eavesdroppers \cite{Li,Lin_13,Gerbracht,Zhang,Romero,Yan,Yang_14}.  Furthermore, in  \cite{Goel},   a new secure transmission  was proposed in order to sustain the  secrecy of the communication, by utilizing   the available power to produce artificial noise for the eavesdropper.  The authors in \cite{Zhou} considered the case of a passive eavesdropper  with multi-antenna transmission, where the transmitter simultaneously transmits an information-bearing signal to the intended receiver and  artificial noise to the eavesdropper.  A  joint information and jamming beamforming technique for a full-duplex base station (BS)  which secures both  uplink and downlink transmission, was proposed in \cite{Zhu}. Finally, cooperation between the source and destination was proposed in \cite{Zheng}, with the destination operating in the full-duplex mode, i.e., the destination receives information from the source and sends a jamming signal to the eavesdropper at the same time.

Being a critical issue, PHY security of cognitive radio networks (CRNs), which deal with specific security risks due to the broadcasting nature of  radio signals,   has not been well investigated until  recently, e.g., in  \cite{Nguyen:TIFS:16,Pei,Yiyang,Gabry,He,Nguyen,Zhu:VT:15,Nguyen_15}. More specifically, in \cite{Nguyen:TIFS:16,Pei,Yiyang}, multi-antennas at the secondary transmitter  were utilized to attain  beamforming that maximizes the secrecy capacity of the secondary system,  while adhering to the peak interference constraint at the primary receiver. In \cite{Gabry}, cooperation  between the secondary system and the primary system was proposed, in order to improve the secrecy capacity of the primary system. Furthermore, a simple case with  single antenna at the eavesdropper was considered in \cite{He, Nguyen}. Particularly, in \cite{He},  joint beamforming for information and jamming noise was proposed to protect both  primary  and  secondary systems, with the secondary user acting as an amplify-and-forward relay to enhance the security of the primary system.   A jamming beamforming technique was designed in \cite{Nguyen}, based on the nullspace of the legitimate channel, in order to protect the primary system by treating the signal from the secondary transmitter as interference. In \cite{Zhu:VT:15}, the authors considered a CRN model, where both  primary user (PU) and  secondary user (SU) send their confidential messages to intended receivers that are surrounded by a single eavesdropper. {\color{black} Besides,  the capacity-equivocation
region of cognitive interference channel was  obtained in \cite{LiangIT09}, where the primary receiver is treated as untrusted user (eavesdropper) who intends to eavesdrop the  confidential message  of the secondary system. Extensions of \cite{LiangIT09} were made in  \cite{BafghiISITTA10,FarsaniISIT14} by additionally considering the secrecy of the primary system}.

In this paper, we consider the PHY security in cooperative cognitive radio multicast communications, where the eavesdroppers intend to wiretap  data from both  primary  and  secondary systems. We assume that the primary transmitter  is equipped only with a single antenna, which implies that the primary transmitter cannot generate a jamming signal or design a beamforming vector to protect itself from the eavesdroppers. The secrecy capacity of the primary system is improved by implementing a cooperative framework between the primary  and secondary systems. Specifically, the primary system allows the secondary system to share its spectrum, and in return the secondary system sends jamming noise to degrade the eavesdropper's channel, in order to protect the primary system. In the CRN multicast transmission model, we assume that there are one  group of PUs and $G$  groups of SUs, where all users in each group receives  identical information from its transmitter, and furthermore,  each group can be surrounded by multiple eavesdroppers. We note that the recent work in \cite{Zhu:VT:15} is a special case of this paper, where only a single receiver and a single eavesdropper are assumed, which is well-known as unicast mode.  

The aim of this paper is  to design the optimal beamforming vectors that realize the PHY security and maximize the secrecy rate of the secondary system, while ensuring adherence to the  individual secrecy rate constraints at each primary user.
 Specifically, the main contributions of this paper can be summarized as follows:
\begin{itemize}
	\item For the perfect CSI case, we design a joint information and jamming signal at the secondary transmitter, where information is intended for secondary receivers  and jamming noise is intended for eavesdroppers. The main objective is to maximize the secrecy rate of the secondary system, while satisfying the minimum secrecy rate requirement for each legitimate user of the primary system as well as  the  power constraint. We show that the equivalent problem can be converted to a single-layer optimization problem, which can be easily solved through convex quadratic programming.
	\item When the CSI of the channel from the secondary transmitter to the PUs is imperfect and only  partial CSI of the eavesdroppers is available, we  transform the non-linear constraints  into a linear matrix inequality and convex constraints, based on a specific matrix inequality lemma. We show that the approximate optimization problem  can be efficiently solved in a similar manner as the perfect CSI case.
	\item We propose an efficient method to find the approximate solution for optimal transmit beamforming,  by providing the convexity of the original problem that is considered through the use of   a convex approximation.  The optimal solutions of transmit beamforming for the confidential information and jamming noise do not fix the transmit strategy. Importantly, we  develop an iterative algorithm of low complexity for the computational solution of the considered optimization problem. The obtained solutions are proved to be at least local optimum, as satisfying the necessary optimal conditions. 
	\item We provide extensive numerical results to justify the novelty of the proposed algorithm and compare its performance with the known solutions. In particular, the numerical results demonstrate fast convergence of the proposed algorithm and a significant improvement of the secrecy rate, compared with other known solutions. We should remark that our results are more general than in \cite{Zhu:VT:15}, which was considered under the assumptions of one eavesdropper and perfect CSI. In addition, the model in this paper is of practical interest in designing networks that are required to transmit the same data to a group of users, for example, in video broadcasting and various applications. Moreover, the  considered problem in this paper is highly nonlinear and nonconvex function, thus it is more challenging to solve compared to SINR-based design in \cite{Zhu:VT:15}.
	\end{itemize}
	
The rest of this paper is organized as follows. Section \ref{System Model} describes the CRN multicast transmission model with multiple eavesdroppers and formulates the optimization problem. Section \ref{Perfect-CSI} derives optimal beamforming for information signal and jamming noise at the secondary transmitter   under the assumption of perfect CSI, while  Section \ref{Imperfect-CSI}  extends the considered problem to the case of imperfect CSI and  passive eavesdropper.  Section \ref{Numerical} provides  numerical results  and discussions. Finally, the conclusions are drawn in Section \ref{Conclusion}. {\color{black}In order to make the rest of the paper easy to follow, the notations and
symbols used in the paper are specified in Table \ref{TableNotations}}.

\section{System Model and Optimization Problem} \label{System Model}

\begin{figure}[t]
\centering
\includegraphics[trim=2.0cm 0cm 0.0cm 0.0cm, width=0.4\textwidth]{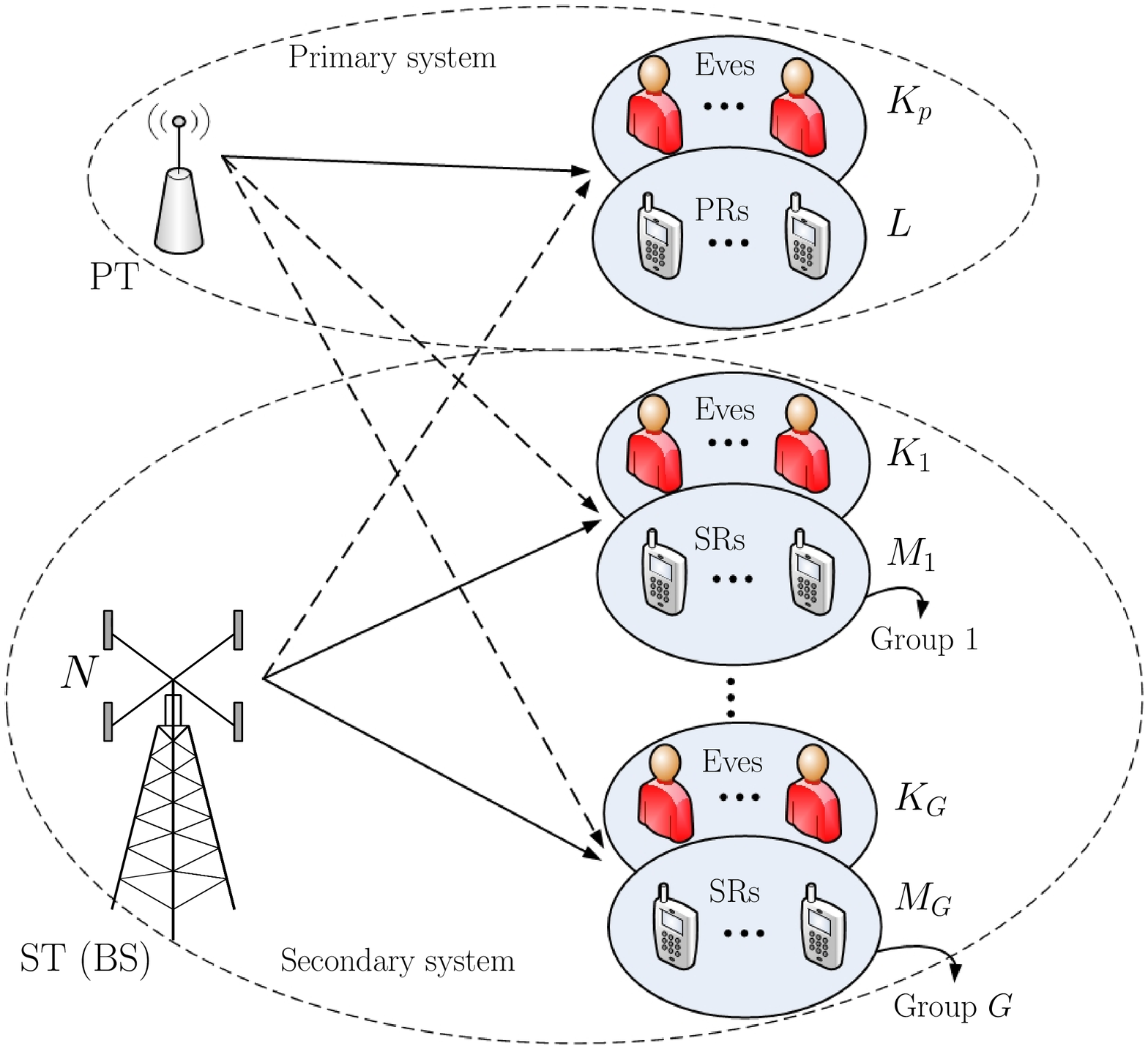}
\caption{A cooperative CRN multicast transmission  model with multiple eavesdroppers.}
\label{fig:zoneselect}
\end{figure}
\subsection{System  Model} \label{Cognitive Radio Network}
We consider the PHY security of CRN multicast transmission  with  cooperation  between a primary system and a secondary system. The primary system consists of one primary transmitter (PT) and $L$ primary receivers (PRs), while the secondary system consists of one secondary transmitter  (ST) and $M$ secondary receivers (SRs),  as illustrated in Fig. 1.   The ST, which is a BS,  is equipped with $N$ antennas,  whereas all  other nodes are equipped with only one antenna.\footnote{We note that the solution for multiple antennas at the PT is straightforward by following the same procedure presented in this paper since the resource allocation strategies at the ST and PT are independent.}  The opportunistic spectrum access is improved by assigning the ST  to send  $G$ information bearing signals $s_g, g=1, \cdots, G$, where $s_g$ is the information being sent to the $g$-th group  with unit average power $\mathbb{E}\{\left|s_g\right|^2\}=1$.  We assume that each individual multicast group $\mathcal{G}_g$ in the secondary system consists of $M_g$  secondary receivers. Specifically, the number of  SRs in  group $\mathcal{G}_g$ is denoted by $\mathcal{S}_g=\left\{1,\cdots, m_g,\cdots, M_g\right\}$. Then, the total number of SRs  in the secondary system with multicast transmission is indeed $M=\sum_{g=1}^GM_g$. In the multicast transmission, all  users within the same group will receive  identical data from its transmitter.   
Regarding security, we assume that the eavesdroppers (Eves) potentially intend to wiretap and decode confidential messages from both  primary  and  secondary systems \cite{Mokari}.  We assume that each group $\mathcal{G}_g$ and the PRs are respectively wiretapped by a set of Eves such as $\mathcal{K}_{e,g}\triangleq\left\{1,\cdots,k_g,\cdots,K_g\right\}$, $\forall g$ and $\mathcal{K}_{p}\triangleq\left\{1,\cdots,k_p,\cdots,K_p\right\}$. This implies that at the same time, each legitimate user is wiretapped by a separate  group of Eves.

\begin{table}[t]
\centering
\caption{Notations and Symbols}
\label{TableNotations}
{\color{black}\begin{tabular}{ll}\hline
$\mathbf{X}^{H}$, $\mathbf{X}^{T}$ and $\tr(\mathbf{X})$      & Hermitian transpose, normal transpose \\
                                                               & and trace of a matrix $\mathbf{X}$                                                             \\
$\|\cdot\|$ and $|\cdot|$                                      & Euclidean norm of a matrix or vector \\
                                                               &and the magnitude of a complex scalar                                                            \\
$\mathbf{I}_N$                                                 & $N\times N$ identity matrix                                                                                                           \\
$\mathbf{x}\sim\mathcal{CN}(\boldsymbol{\eta},\boldsymbol{Z})$ &Random vector following a complex circular \\
                                                               &Gaussian distribution with mean $\boldsymbol{\eta}$ \\
																															 &and covariance matrix $\boldsymbol{Z}$ \\
$\mathbb{E}[\cdot]$                                            &Statistical expectation                                                                                                               \\
$\mathbf{X}\succeq\mathbf{0}$                                  &Positive semidefinite matrix                                                                                                          \\
$\Re\{\cdot\}$                                                 &Real part of the argument                                                                                                             \\
$\mathbf{h}_{m_g}$ and $\mathbf{f}_{l}$                        &Channels from ST to $m_g$-th $\mbox{SR}$ and $l$-th $\mbox{PR}$                                                                       \\
$\mathbf{g}_{k_g}$ and $\mathbf{f}_{k_p}$                &Channels from ST to $k_g$-th $\mbox{Eve}$ and $k_p$-th $\mbox{Eve}$                                                                   \\
$h_{l}$ and $f_{m_g}$                                          &Channels from PT to $l$-th $\mbox{PR}$ and  $m_g$-th $\mbox{SR}$                                                         \\
$g_{k_p}$ and $f_{k_g}$                                        &Channels from PT to $k_p$-th $\mbox{Eve}$ and $k_g$-th $\mbox{Eve}$                                                                   \\
$\mathbf{w}_g$                                                 &Beamforming vector at ST intended to group $\mathcal{G}_g$                                                                            \\
$\mathbf{u}$                                                   &Artificial noise vector with $\mathbf{u}\sim\mathcal{CN}(\mathbf{0}, \mathbf{U}\mathbf{U}^H)$                                        \\
$t_g$                                                           &Maximum allowable rate for $k_g$-th $\mbox{Eve}$                                                                             \\
$z$                                                & Maximum allowable rate  for $k_p$-th Eve                                                                                      \\
$\varphi$                                                      &Objective variable in maximizing secrecy rate \\
                                                               &of secondary system                                                                        \\
$\alpha$                                                       &Minimum SINR requirement for $l$-th PR \\ 
$\phi_g$                                                       &Maximum received SINR for  $k_g$-th $\mbox{Eve}$\\
$\beta$                                                       &Maximum received SINR for $k_p$-th Eve \\                                                                                                                    
\hline		\end{tabular}}
\end{table}

We aim to design multiple beamforming vectors at  the ST, one for the JN and the other  for its own information signal, to protect both  primary and secondary systems.   The transmit power at the PT is $P_p >0$ and the data intended for the PRs is $x_p$ with unit average power $\mathbb{E}\{\left|x_p\right|^2\}=1$. Before transmission, the data of the SRs $s_g$ in the group $\mathcal{G}_g$ is weighted to the $N\times 1$ beamforming vector $\mathbf{w}_g$, $\forall g$.  Hence,  the transmitted signals at the ST can be expressed through a vector $\mathbf{x}_s$ as  
\begin{equation}
\mathbf{x}_s = \sum_{g=1}^G\mathbf{w}_gs_g+\mathbf{u}
\label{eq:xs}
\end{equation}
where  $\mathbf{u}$ is the artificial noise vector, whose elements are zero-mean complex Gaussian random variables with covariance matrix $\mathbf{U}\mathbf{U}^{H}$, such that $\mathbf{u}\sim	\mathcal{CN}(\mathbf{0}, \mathbf{U}\mathbf{U}^H)$ with $\mathbf{U}\in\mathbb{C}^{N\times N}$. The artificial noise  $\mathbf{u}$ is assumed to be unknown to all SRs, PRs, and Eves. For notational simplicity, we define $\mathbf{w}\triangleq[\mathbf{w}_{1}^T, \mathbf{w}_{2}^T,\cdots,\mathbf{w}_{G}^T]^T\in\mathbb{C}^{NG\times 1}$.

The corresponding SINR at the $l$-th $\mbox{PR}$ for $l=1,\cdots, L$ and the $k_p$-th $\mbox{Eve}$  for $k_p=1,\cdots, K_p$  are respectively given by
\begin{IEEEeqnarray}{rCl}
\Gamma_{p,l}(\mathbf{w},\mathbf{U})&=&\frac{P_p|h_l|^2}{\sum_{g=1}^G|\mathbf{f}_l^{H}\mathbf{w}_g|^2+ \|\mathbf{f}_l^{H}\mathbf{U}\|^2+\sigma_l^2},\label{eq:SINR:pr}\\
\Gamma_{e,k_p}(\mathbf{w},\mathbf{U})&=&\frac{P_p|g_{k_p}|^2}{\sum_{g=1}^G|\mathbf{f}_{k_p}^{H}\mathbf{w}_g|^2+ \|\mathbf{f}_{k_p}^{H}\mathbf{U}\|^2+\sigma_{k_p}^2}\label{eq:SINR:pre}
\end{IEEEeqnarray} 
where $h_{l}\in\mathbb{C}$, $g_{k_p} \in\mathbb{C}$, $\mathbf{f}_{l} \in\mathbb{C}^{N\times 1}$, and $\mathbf{f}_{k_p}\in\mathbb{C}^{N\times 1}$ are  the respective baseband equivalent channels of the links PT $\rightarrow$ $l$-th $\mbox{PR}$, PT $\rightarrow$ $k_p$-th $\mbox{Eve}$, ST $\rightarrow$ $l$-th $\mbox{PR}$, and ST $\rightarrow$ $k_p$-th $\mbox{Eve}$. $\sigma_l^2$ and $\sigma_{k_p}^2$ are the variance of the additive white Gaussian noise (AWGN) at the $l$-th $\mbox{PR}$  and  ${k_p}$-th $\mbox{Eve}$, respectively.

The respective SINR at the $m_g$-th $\mbox{SR}$  in the group $\mathcal{G}_g$ and the $k_g$-th  $\mbox{Eve}$  are given by
\begin{IEEEeqnarray}{rCl}\label{eq:SINR_ekg}
&&\Gamma_{s,m_g}(\mathbf{w},\mathbf{U})= \nonumber\\
&&\quad\frac{|\mathbf{h}_{m_g}^{H}\mathbf{w}_g|^2}{\sum_{i=1, i\neq g }^G|\mathbf{h}_{m_g}^{H}\mathbf{w}_i|^2+ \|\mathbf{h}_{m_g}^{H}\mathbf{U}\|^2+P_p|f_{m_g}|^2+\sigma_{m_g}^2},\label{eq:SINR:sr}\qquad\\
&&\Gamma_{e,k_g}(\mathbf{w},\mathbf{U})= \nonumber\\
&&\qquad\frac{|\mathbf{g}_{k_g}^{H}\mathbf{w}_g|^2}{\sum_{i=1,i\neq g}^G|\mathbf{g}_{k_g}^{H}\mathbf{w}_i|^2+\|\mathbf{g}_{k_g}^{H}\mathbf{U}\|^2+P_p|f_{k_g}|^2+\sigma_{k_g}^2}\label{eq:SINR:se}
\end{IEEEeqnarray} 
where $\mathbf{h}_{m_g}\in\mathbb{C}^{N\times 1}$, $\mathbf{g}_{k_g}\in\mathbb{C}^{N\times 1}$, $f_{m_g}\in\mathbb{C}$, and $f_{k_g} \in\mathbb{C}$ are the corresponding baseband equivalent channels of the links ST $\rightarrow$ $m_g$-th $\mbox{SR}$, ST $\rightarrow$ $k_g$-th $\mbox{Eve}$, PT $\rightarrow$ $m_g$-th $\mbox{SR}$, PT $\rightarrow$ $k_g$-th $\mbox{Eve}$. $\sigma_{m_g}^2$ and $\sigma_{k_g}^2$ are the variance of AWGN at the $m_g$-th $\mbox{PR}$  and $k_g$-th $\mbox{Eve}$, respectively. We further assume that all channels remain constant during a transmission block, yet change independently from one block
to another. {\color{black} By using dirty-paper coding (DPC), the ST with encoding order from the group $\mathcal{G}_1$ to $\mathcal{G}_G$ enables the SRs in  $\mathcal{S}_g$ to know the information signals intended for the SRs in $\mathcal{S}_{g'}, g'= 1,\cdots,g-1$  non-casually,  so that it can
be perfectly eliminated \cite{Gamal}. Hence, the SINR in \eqref{eq:SINR:sr} by DPC can be rewritten as
\begin{IEEEeqnarray}{rCl}\label{eq:SINR_ekgDPC}
&&\Gamma_{s,m_g}^{\mathtt{DPC}}(\mathbf{w},\mathbf{U})= \nonumber\\
 &&\qquad\qquad \frac{|\mathbf{h}_{m_g}^{H}\mathbf{w}_g|^2}{\sum_{ i> g }|\mathbf{h}_{m_g}^{H}\mathbf{w}_i|^2+ \|\mathbf{h}_{m_g}^{H}\mathbf{U}\|^2+P_p|f_{m_g}|^2+\sigma_{m_g}^2}.\nonumber
\end{IEEEeqnarray} 
 It is clear that under the same beamformer/precoder $(\mathbf{w},\mathbf{U})$, $\Gamma_{s,m_g}^{\mathtt{DPC}}(\mathbf{w},\mathbf{U})$ is better than $\Gamma_{s,m_g}(\mathbf{w},\mathbf{U})$. However, DPC is difficult to implement in practice  due to its extremely high computational complexity and thus  remains only
as a theoretical bound.}

{\color{black}The channel of each legitimate user together with the respective Eves form a compound wiretap channel \cite{Lian:EUR:09}}. Therefore, the achievable secrecy rate for the $l$-th PR  of the primary system, denoted by $C_{p,l}(\mathbf{w},\mathbf{U})$, can be expressed as \cite{Lian:EUR:09,CsiszarIT78}
\begin{IEEEeqnarray}{rCl}\label{eq:C_pr}
C_{p,l}(\mathbf{w},\mathbf{U})=&&\Bigl[\log_2\bigl(1+\Gamma_{p,l}(\mathbf{w},\mathbf{U})\bigr)\nonumber\\
&&\qquad -\;\underset{k_p\in\mathcal{K}_p}{\max}\log_2\bigl(1+\Gamma_{e,k_p}(\mathbf{w},\mathbf{U})\bigr)\Bigr]^+
\end{IEEEeqnarray} 
where  $\left[x\right]^+=\max\left\{0,x\right\}$.

Similarly, the achievable secrecy rate for the  $m_g$-th SR of the secondary system, denoted by $C_{s,m_g}(\mathbf{w},\mathbf{U})$, can be expressed as \cite{Gopala}
\begin{IEEEeqnarray}{rCl}\label{eq:C_sk}
C_{s,m_g}(\mathbf{w},\mathbf{U})=&&\Bigl[\log_2\bigr(1+\Gamma_{s,m_g}(\mathbf{w},\mathbf{U})\bigr)\nonumber\\
&&\quad -\;  \underset{k_g\in\mathcal{K}_{e,g}}{\max}\log_2\bigr(1+\Gamma_{e,k_g}(\mathbf{w},\mathbf{U})\bigr)\Bigr]^+.
\end{IEEEeqnarray}

If $C_{p,l}(\mathbf{w},\mathbf{U})$ and $C_{s,m_g}(\mathbf{w},\mathbf{U})$ are above zero,  the signal transmitted from the PT and ST are determined to be ``undecodable'' as is indicated in \cite{Tekin}. 

\subsection{Optimization Problem Formulation}
The objective of the system design is to maximize the minimum (max-min) secrecy rate  of the secondary system  while satisfying  the minimum QoS requirements, such as the  secrecy rate achievable for the primary system. Accordingly, the optimization problem can be mathematically formulated as
\begin{IEEEeqnarray}{rCl}\label{eq:problem_1}
\mathbf{P.1}:\quad\underset{\mathbf{w}, \mathbf{U}}{\mathrm{\max}}&&\mathop{\mathrm{\min}}\limits_{m_g\in\mathcal{S}_{g},   g\in\mathcal{G}}
                   \quad C_{s,m_g}(\mathbf{w},\mathbf{U})\IEEEyessubnumber\label{eq:8a}\\
  \st&&\quad C_{p,l}(\mathbf{w},\mathbf{U})\geq \bar{R}_{p,l},\, l\in\mathcal{L} \IEEEyessubnumber\label{eq:8b}\\
     		  && \quad  \sum\nolimits_{g=1}^G\|\mathbf{w}_g\|^2+ \|\mathbf{U}\|^2 \leq P_{s}\, \IEEEyessubnumber\label{eq:8c}
\end{IEEEeqnarray}
where $\mathcal{L}\triangleq\{1,\cdots, L\}$ and $\mathcal{G}\triangleq\{1,\cdots, G\}$. In \eqref{eq:8b}, $\bar{R}_{p,l} > 0$ are the minimum secrecy rate requirement for each legitimate user  of the primary system. This implies that the QoS for each PR can be different and flexible. In \eqref{eq:8c}, $P_s$ is the transmit power budget at the ST.

\begin{remark}
There are two other performance metrics of interest involved in the considered system. In particular,  one is to maximize the secrecy rate of the primary system subject to the secrecy rate threshold of secondary system and the transmit power budget at the ST, while the other is to minimize the total transmit power at the ST subject to the secrecy rate threshold of both systems. However, the optimal solution for \eqref{eq:problem_1} is also applicable to those cases that will be presented shortly.
\end{remark}

The recent works in \cite{Zhu:VT:15, Zhu, Lui, Ng} often introduce  new variables to relax the optimization problem as
\begin{equation}
\widetilde{\mathbf{W}}_g = \mathbf{w}_g\mathbf{w}_g^H, \forall g
\end{equation}
which must satisfy the rank-one constraint, i.e., $\rank(\widetilde{\mathbf{W}}_g) =1, \forall g$. Then, they use   semi-definite  program (SDP) relaxation to solve the optimization problem by constructing an equivalent problem. In which, the optimal solution involves the dual variables of the relaxed problem. Unfortunately, some numerical solvers may not exhibit the optimal solution of dual variables, and then the construction of primal variables may not be possible. In what follows, we will solve \eqref{eq:problem_1} via a convex quadratic program and thus the rank-one constraints are automatically satisfied.

\section{Theoretical Benchmark  With Perfect CSI}\label{Perfect-CSI}
We first consider the case for which the instantaneous CSI of all  channels is available at the transceivers. In particular, the CSI of all channels in both  systems can be obtained through feedback from the legitimate receivers to the legitimate transmitters.  {\color{black}After CSI acquisition, we assume that only $M$ SRs and $L$ PRs are scheduled to be concurrently served. Herein, the remaining users (unscheduled users) are not necessarily malicious, but they could be
 untrusted users. Thus, the unscheduled users are treated as potential eavesdroppers, but with perfectly known CSI at the transmitters.} These assumptions are consistent with several  previous works  on information theoretic analysis and optimization for the similar kind of problem, \cite{Gopala,Tekin,Lin_13,Zhu, Zheng}, for instance.\footnote{Though  this assumption is quite ideal, however, the performance with assumption of perfect CSI is still of practical importance since it plays as a benchmark how the CRN system may achieve in more realistic conditions \cite{Zhu:VT:15, Nguyen, Nguyen_15, Yiyang}. } 
\subsection{Optimal Solution}\label{sc:construcsolution}

We note that finding an optimal solution for \eqref{eq:problem_1} is challenging due to the nonconcavity of the objective function and nonconvexity of the feasible set. In this section, we propose an iterative algorithm that arrives a local optimum of the considered optimization problem. As the first step, we convert \eqref{eq:problem_1} to another equivalent form as
\begin{IEEEeqnarray}{lCl}\label{eq:rew:1}
&&\underset{\mathbf{w}, \mathbf{U}, \boldsymbol{t}, z}{\mathrm{\mathrm{maximize}}}\mathop{\mathrm{\min}}\limits_{m_g\in\mathcal{S}_{g},g\in\mathcal{G}}                    \left\{\log_2\bigl(1+\Gamma_{s,m_g}(\mathbf{w},\mathbf{U})\bigl) - t_{g}\right\} \IEEEyessubnumber\label{eq:rew:a}\\
  &&\st\ \log_2\bigl(1+\Gamma_{e,k_g}(\mathbf{w},\mathbf{U})\bigl) \leq t_g,\ k_g\in\mathcal{K}_{e,g}, g\in\mathcal{G} \IEEEyessubnumber\label{eq:rew:b} \quad\\         
	&&\qquad \log_2\bigl(1+\Gamma_{p,l}(\mathbf{w},\mathbf{U})\bigr) - z\geq \bar{R}_{p,l},\ l\in\mathcal{L} \IEEEyessubnumber\label{eq:rew:c}\\
	&&  \qquad \log_2\bigl(1+\Gamma_{e,k_p}(\mathbf{w},\mathbf{U})\bigl) \leq z,\ k_p\in\mathcal{K}_p\IEEEyessubnumber\label{eq:rew:d}\\
     		  &&\qquad  \eqref{eq:8c}\IEEEyessubnumber\label{eq:rew:e}
\end{IEEEeqnarray}
where $\boldsymbol{t}\triangleq \{t_{g}\}$ and $z$ are the maximum allowable rates for Eves to wiretap the information signals from the ST and the PT, respectively.
The equivalence of \eqref{eq:problem_1} and \eqref{eq:rew:1} can be easily confirmed by justifying that  the constraint  \eqref{eq:rew:b} must hold with equality at optimum. We now provide a sketch of the proof to verify this
point.  Suppose that $\log_2\left(1+\Gamma_{e,k_g}(\mathbf{w},\mathbf{U})\right) < t_g$  for some $k_g$, there exist  the positive constants, i.e., $\Delta t_g > 0$  such that $\log_2\left(1+\Gamma_{e,k_g}(\mathbf{w},\mathbf{U})\right) = t_g - \Delta t_g$. As a result, $t_g - \Delta t_g$   is feasible to \eqref{eq:rew:1} but yielding a strictly larger objective. Thus, this is a contradiction to the optimality assumption. Even after the above transformations, \eqref{eq:rew:1} is still nonconvex and difficult to solve due to nonconcavity of the objective function. Toward a tractable form, let us rewrite \eqref{eq:rew:1} equivalently as
\begin{IEEEeqnarray}{rCl}\label{eq:rew:2}
&&\underset{\mathbf{w}, \mathbf{U}, \boldsymbol{t}, z, \varphi}{\mathrm{\mathrm{maximize}}}\quad  \varphi\IEEEyessubnumber\label{eq:rew:2:a}\\
   \st &&\quad    \log_2\bigl(1+\Gamma_{s,m_g}(\mathbf{w},\mathbf{U})\bigl) - t_{g} \geq  \varphi, m_g\in\mathcal{S}_{g},g\in\mathcal{G} \IEEEyessubnumber\label{eq:rew:2:b} \qquad\\
	&& \quad\log_2\bigl(1+\Gamma_{e,k_g}(\mathbf{w},\mathbf{U})\bigl) \leq t_g,\ k_g\in\mathcal{K}_{e,g}, g\in\mathcal{G} \IEEEyessubnumber\label{eq:rew:2:c} \\         
	&& \quad\log_2\bigl(1+\Gamma_{p,l}(\mathbf{w},\mathbf{U})\bigl) - z\geq \bar{R}_{p,l},\ l\in\mathcal{L} \IEEEyessubnumber\label{eq:rew:2:d}\\
	&&  \quad \log_2\bigl(1+\Gamma_{e,k_p}(\mathbf{w},\mathbf{U})\bigl) \leq z,\ k_p\in\mathcal{K}_p\IEEEyessubnumber\label{eq:rew:2:e}\\
     		  &&\quad  \eqref{eq:8c} \IEEEyessubnumber\label{eq:rew:2:f}
\end{IEEEeqnarray}
where $\varphi$ is newly introduced variable to maximize the secrecy rate of the secondary system. Observe that the objective function is monotonic in its argument, therefore, we now only deal with the nonconvex constraints \eqref{eq:rew:2:b}-\eqref{eq:rew:2:e}. Toward this end, we provide the following result.\footnote{Hereafter, suppose the value of $(\mathbf{w},\mathbf{U})$ at the $(n+1)$-th iteration in an iterative algorithm presented shortly is denoted by $(\mathbf{w}^{(n)},\mathbf{U}^{(n)})$.} 
{\color{black}\begin{lemma}\label{LemmaPCSI}
For the secondary system, the inner convex approximations of nonconvex constraints \eqref{eq:rew:2:b} and \eqref{eq:rew:2:c} are given by:
\begin{IEEEeqnarray}{rCl}
\mathcal{F}_{m_g}^{(n)}(\mathbf{w},\mathbf{U}) &\geq& (\varphi + t_g)\ln2\label{eq:rew:2:b:equi},\\
\mathcal{F}_{k_g}^{(n)}(\mathbf{w},\mathbf{U}) &\leq&  t_g\ln2\label{eq:sesinr:3}
\end{IEEEeqnarray}
where $\mathcal{F}_{m_g}^{(n)}(\mathbf{w},\mathbf{U})$ and  $\mathcal{F}_{k_g}^{(n)}(\mathbf{w},\mathbf{U})$ are a lower bounding concave function for $\log_2\bigl(1+\Gamma_{s,m_g}(\mathbf{w},\mathbf{U})\bigl)$ and an upper bounding convex function for $\log_2\bigl(1+\Gamma_{e,k_g}(\mathbf{w},\mathbf{U})\bigl)$, which are concretized by \eqref{eq:srlog:2} and \eqref{eq:sesinr:4} in Appendix A, respectively.

Similarly for the primary system, the nonconvex constraints \eqref{eq:rew:2:d} and \eqref{eq:rew:2:e} are innerly approximated by
the following convex constraints:
\begin{IEEEeqnarray}{rCl}\label{eq:prsinr:1}
\mathcal{P}_{l}^{(n)}(\mathbf{w},\mathbf{U}) &\geq&  (z + \bar{R}_{p,l})\ln2\label{eq:prsinr:1a},\\
\mathcal{P}_{k_p}^{(n)}(\mathbf{w},\mathbf{U}) &\leq&  z\ln2\label{eq:prsinr:1b}
\end{IEEEeqnarray}
where $\mathcal{P}_{l}^{(n)}(\mathbf{w},\mathbf{U})$ and  $\mathcal{P}_{k_p}^{(n)}(\mathbf{w},\mathbf{U})$ are a lower bounding concave function for $\log_2\bigl(1+\Gamma_{p,l}(\mathbf{w},\mathbf{U})\bigl)$ and an upper bounding convex function for $\log_2\bigl(1+\Gamma_{e,k_p}(\mathbf{w},\mathbf{U})\bigl)$, which are also concretized by \eqref{eq:prsinr:2} and \eqref{eq:prsinr:3} in Appendix A, respectively.
\end{lemma}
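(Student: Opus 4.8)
The plan is to read \eqref{eq:rew:2:b}--\eqref{eq:rew:2:e} as the generic obstacle of a successive convex approximation (SCA) scheme: each left-hand side is a logarithm of an SINR that must be replaced, at the current iterate $(\mathbf{w}^{(n)},\mathbf{U}^{(n)})$, by a surrogate that (i) bounds the true quantity globally in the correct direction, namely a concave \emph{minorant} for the legitimate rates in \eqref{eq:rew:2:b} and \eqref{eq:rew:2:d} and a convex \emph{majorant} for the eavesdropper rates in \eqref{eq:rew:2:c} and \eqref{eq:rew:2:e}, and (ii) is tangent to it, i.e. matches both its value and its gradient at $(\mathbf{w}^{(n)},\mathbf{U}^{(n)})$. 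Property (i) is what makes the approximation \emph{inner}: a minorant where the original reads ``$\ge$'' and a majorant where it reads ``$\le$'' guarantees that every point feasible for the surrogate program is feasible for \eqref{eq:rew:2}. Properties (i) and (ii) together are what later force a fixed point of the iterations to satisfy the KKT conditions of \eqref{eq:rew:2}.

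The engine is a single decomposition. For each rate I would write $\log_2(1+\Gamma)=\tfrac{1}{\ln2}\bigl[\ln A(\mathbf{w},\mathbf{U})-\ln B(\mathbf{w},\mathbf{U})\bigr]$, where $A$ is the total received power (signal $+$ interference $+$ noise) and $B$ is interference $+$ noise only; from \eqref{eq:SINR:sr}, \eqref{eq:SINR:se}, \eqref{eq:SINR:pr}, \eqref{eq:SINR:pre} both $A$ and $B$ are convex quadratic in $(\mathbf{w},\mathbf{U})$, the terms $\|\mathbf{h}_{m_g}^{H}\mathbf{U}\|^2$ being convex as squared norms of a linear image of $\mathbf{U}$. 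I then bound the two logarithms with two complementary moves. For a concave minorant of $\ln A-\ln B$: replace the convex quadratic $A$ by its supporting hyperplane $L_A$ at the iterate (the gradient inequality gives $A\ge L_A$ globally), and use monotonicity of $\ln$ to obtain the concave term $\ln L_A\le \ln A$; and linearize the convex scalar map $\eta\mapsto-\ln\eta$ at $\eta=B(\mathbf{w}^{(n)},\mathbf{U}^{(n)})$, which yields $-\ln B\ge \mathrm{const}-B/B^{(n)}$, concave in $(\mathbf{w},\mathbf{U})$ since $-B$ is concave. Their sum is the required concave lower bound, and it is immediate to check it is tight to first order at the iterate; this is the role of $\mathcal{F}_{m_g}^{(n)}$ and, via the constant-numerator SINR \eqref{eq:SINR:pr}, of $\mathcal{P}_{l}^{(n)}$, whose explicit forms are \eqref{eq:srlog:2} and \eqref{eq:prsinr:2}.

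For the eavesdropper rates I would simply reverse the two moves: bound $\ln A$ from above by the tangent of the concave $\ln$ (giving $\mathrm{const}+A/A^{(n)}$, convex), and bound $-\ln B$ from above by $-\ln L_B$ using the supporting hyperplane $L_B\le B$ (a negative log of an affine function, hence convex). Their sum is the convex majorant $\mathcal{F}_{k_g}^{(n)}$ (respectively $\mathcal{P}_{k_p}^{(n)}$) of \eqref{eq:sesinr:4} and \eqref{eq:prsinr:3}, again tangent at the iterate. Inserting these surrogates into \eqref{eq:rew:2:b}--\eqref{eq:rew:2:e} and clearing the factor $\ln2$ produces exactly \eqref{eq:rew:2:b:equi}, \eqref{eq:sesinr:3}, \eqref{eq:prsinr:1a}, \eqref{eq:prsinr:1b}. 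The main obstacle is not any one inequality but keeping the four bounding \emph{directions} mutually consistent with the inner-approximation requirement while simultaneously certifying value-and-gradient tangency at $(\mathbf{w}^{(n)},\mathbf{U}^{(n)})$; it is precisely this tangency, combined with the global bounding property, that will underpin the later KKT/local-optimality guarantee for the overall iterative algorithm, so the bookkeeping of which term is minorized and which is majorized must be watertight.
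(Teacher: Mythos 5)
Your construction is a valid successive-convex-approximation argument and would yield legitimate inner approximations with the required tangency, but it is not the route the paper takes, and in one respect it overclaims. The paper does \emph{not} split the rates as $\ln A-\ln B$ and bound the two logarithms separately. For the legitimate links it invokes the single known minorization \eqref{inq1} for the jointly treated function $\ln\bigl(1+|x|^2/y\bigr)$ at the iterate (after rewriting the rate as $-\ln(1-\Theta)$), which produces a concave \emph{quadratic} surrogate --- a constant, a term linear in $\mathbf{w}_g$, and a negated convex quadratic --- with no logarithm of an affine function anywhere; this is what makes each subproblem a convex quadratic program, a point the paper relies on in its complexity analysis. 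For the eavesdropper links the paper first applies the scalar tangent bound \eqref{inq3} to $\ln(1+\Gamma)$ as in \eqref{eq:sesinr:3:temp1}, and then convexifies the residual SINR ratio by replacing its convex-quadratic denominator with its affine minorant (via \eqref{inq2}), yielding a quadratic-over-affine convex majorant. Your two moves ($\ln L_A$ with $L_A$ the supporting hyperplane of $A$, and $-\ln L_B$) instead introduce $\log$-of-affine terms, so the surrogates you build are genuinely different functions from $\mathcal{F}_{m_g}^{(n)}$, $\mathcal{F}_{k_g}^{(n)}$, $\mathcal{P}_{l}^{(n)}$, $\mathcal{P}_{k_p}^{(n)}$; your assertion that your sums ``are'' the explicit forms \eqref{eq:srlog:2}, \eqref{eq:sesinr:4}, \eqref{eq:prsinr:2}, \eqref{eq:prsinr:3} is therefore incorrect, and since the lemma is stated in terms of those specific functions, your argument proves a parallel statement rather than the one written. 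Two further caveats: your $\ln L_A$ and $-\ln L_B$ terms are only defined where the affine minorants stay positive, a domain restriction you do not address (the paper's legitimate-link surrogate avoids it entirely, and its Eve-link surrogate shares it only through $\Phi_{k_g}^{(n)}$ in a denominator); and the resulting subproblems would no longer be quadratic programs. What your route buys in exchange is conceptual uniformity --- one bounding recipe applied four times with the directions flipped --- and the bookkeeping of minorant versus majorant that you emphasize is indeed exactly the content of the lemma.
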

\begin{proof}
See Appendix A.
\end{proof}
It is noteworthy that the following equalities hold at the optimum, i.e., $(\mathbf{w}^{(n+1)},\mathbf{U}^{(n+1)}) = (\mathbf{w}^{(n)},\mathbf{U}^{(n)})$:
\begin{IEEEeqnarray}{rCl}
\mathcal{F}_{m_g}^{(n)}(\mathbf{w}^{(n)},\mathbf{U}^{(n)})&=& \log_2\Bigl(1+\Gamma_{s,m_g}\bigl(\mathbf{w}^{(n)},\mathbf{U}^{(n)}\bigr)\Bigl),\label{eq:srlog:3}\\
\mathcal{F}_{k_g}^{(n)}(\mathbf{w}^{(n)},\mathbf{U}^{(n)}) &=& \log_2\Bigl(1+\Gamma_{e,k_g}\bigl(\mathbf{w}^{(n)},\mathbf{U}^{(n)}\bigl)\Bigl),\\
\mathcal{P}_{l}^{(n)}(\mathbf{w}^{(n)},\mathbf{U}^{(n)})   &=& \log_2\Bigl(1+\Gamma_{p,l}\bigl(\mathbf{w}^{(n)},\mathbf{U}^{(n)}\bigl)\Bigl),\\
\mathcal{P}_{k_p}^{(n)}(\mathbf{w}^{(n)},\mathbf{U}^{(n)}) &=& \log_2\Bigl(1+\Gamma_{e,k_p}\bigl(\mathbf{w}^{(n)},\mathbf{U}^{(n)}\bigl)\Bigl).  
\end{IEEEeqnarray}}

In summary, at the $(n+1)$-th iteration of the proposed method, we solve the following convex problem
\begin{IEEEeqnarray}{rCl}\label{eq:convexapp:1}
&&\underset{\mathbf{w}, \mathbf{U}, \boldsymbol{t}, z, \varphi}{\mathrm{\mathrm{maximize}}}\quad  \varphi\IEEEyessubnumber\label{eq:convexapp:1:a}\\
   \st &&\quad    \mathcal{F}_{m_g}^{(n)}(\mathbf{w},\mathbf{U}) \geq (\varphi + t_g)\ln2,\ m_g\in\mathcal{S}_{g},g\in\mathcal{G} \IEEEyessubnumber\label{eq:convexapp:1:b} \qquad\\
	&& \quad \mathcal{F}_{k_g}^{(n)}(\mathbf{w},\mathbf{U}) \leq  t_g\ln2,\ k_g\in\mathcal{K}_{e,g}, g\in\mathcal{G} \IEEEyessubnumber\label{eq:convexapp:1:c} \\ 	&& \quad\mathcal{P}_{l}^{(n)}(\mathbf{w},\mathbf{U}) \geq  (z + \bar{R}_{p,l})\ln2,\ l\in\mathcal{L} \IEEEyessubnumber\label{eq:convexapp:1:d}\\
	&&  \quad \mathcal{P}_{k_p}^{(n)}(\mathbf{w},\mathbf{U}) \leq  z\ln2,\ k_p\in\mathcal{K}_p  \IEEEyessubnumber\label{eq:convexapp:12:e}\\
     		  &&\quad  \eqref{eq:8c}. \IEEEyessubnumber\label{eq:convexapp:1:f}
\end{IEEEeqnarray}
{\color{black}An iterative algorithm for solving \eqref{eq:convexapp:1} requires an initial feasible point of \eqref{eq:rew:2} to start, i.e., the constraints \eqref{eq:rew:2:d}-\eqref{eq:rew:2:f} are satisfied}. Therefore, we solve the following nonconvex optimization problem
\begin{IEEEeqnarray}{rCl}\label{ini2.m}
\max_{\mathbf{w}, \mathbf{U},  z}\;&&\min_{l\in\mathcal{L}}\;
\Bigl\{\log_2\bigl(1+\Gamma_{p,l}(\mathbf{w},\mathbf{U})\bigl) - z -  \bar{R}_{p,l}
  \Bigr\}\IEEEyessubnumber \label{eq:ini2:a}\\
	&&\st\quad \log_2\bigl(1+\Gamma_{e,k_p}(\mathbf{w},\mathbf{U})\bigl) \leq z,\ k_p\in\mathcal{K}_p\IEEEyessubnumber \label{eq:ini2:b}\\
	  &&\qquad\ \,  \eqref{eq:8c}. \IEEEyessubnumber \label{eq:ini2:c}
\end{IEEEeqnarray}
We first generate a feasible point $(\mathbf{w}^{(0)}, \mathbf{U}^{(0)})$ to satisfy \eqref{eq:ini2:c} and then solve the following convex approximation problem at the $n$-th iteration
\begin{IEEEeqnarray}{rCl}\label{ini3.m}
\max_{\mathbf{w}, \mathbf{U},  z}\;&&\min_{l\in\mathcal{L}}\;
\Bigl\{\mathcal{P}_{l}^{(n)}(\mathbf{w},\mathbf{U}) -  (z + \bar{R}_{p,l})\ln2
  \Bigr\}\IEEEyessubnumber \label{eq:ini3:a}\\
	&&\st\quad \mathcal{P}_{k_p}^{(n)}(\mathbf{w},\mathbf{U}) \leq  z\ln2,\ k_p\in\mathcal{K}_p\IEEEyessubnumber \label{eq:ini3:b}\\
	  &&\qquad\ \, \eqref{eq:8c} \IEEEyessubnumber \label{eq:ini3:c}
\end{IEEEeqnarray}
and  output a feasible point of \eqref{eq:rew:2} when
\begin{equation}\label{ini4.m}
\min_{l\in\mathcal{L}}\;
\Bigl\{\mathcal{P}_{l}^{(n)}(\mathbf{w},\mathbf{U}) -  (z + \bar{R}_{p,l})\ln2
  \Bigr\}\geq 0.
\end{equation}
We numerically observe that it requires no more than 3 iterations to satisfy \eqref{ini4.m} in all cases.
After solving \eqref{eq:convexapp:1},  we update $(\mathbf{w}^{(n)}, \mathbf{U}^{(n)})$ for the next iteration until convergence or maximum
required number of iterations. Algorithm \ref{algo:proposed:DUAL} outlines the proposed iterative method for solving \eqref{eq:problem_1}.

\begin{algorithm}[t]
\begin{algorithmic}[1]

\protect\caption{An iterative algorithm to solve \eqref{eq:problem_1}}

\label{algo:proposed:DUAL}

\global\long\def\algorithmicrequire{\textbf{Initialization:}}

\REQUIRE  Set $n:=0$ and solve \eqref{ini3.m} to generate an  initial feasible point $\bigl(\mathbf{w}^{(n)},\mathbf{U}^{(n)}\bigr)$

\REPEAT
\STATE Solve \eqref{eq:convexapp:1} to obtain the optimal solution: $\bigl(\mathbf{w}^{*},\mathbf{U}^{*}\bigr)$.

\STATE Update\ $\mathbf{w}^{(n+1)}:=\mathbf{w}^{*}$  and  $\mathbf{U}^{(n+1)}:=\mathbf{U}^{*}$.

\STATE Set $n:=n+1.$
\UNTIL Convergence or maximum required number of iterations\\
\end{algorithmic} \end{algorithm}

\subsection{Proof of Convergence and Complexity Analysis}
The convergence result of Algorithm \ref{algo:proposed:DUAL} is stated in the following proposition.
\begin{proposition}\label{prop1} Algorithm~\ref{algo:proposed:DUAL} produces a sequence $\bigl\{ \bigl(\mathbf{w}^{(n)},\mathbf{U}^{(n)}\bigl)\bigr\}$
of improved points of \eqref{eq:problem_1}, which converges to a Karush-Kuhn-Tucker (KKT) point.
\end{proposition}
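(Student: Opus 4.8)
The plan is to recognize Algorithm~\ref{algo:proposed:DUAL} as an instance of the inner-approximation (successive convex approximation) scheme and to verify the three structural properties that guarantee convergence to a KKT point: (i) the surrogate feasible set is contained in the true feasible set, (ii) the surrogate is tight in both value and gradient at the current iterate, and (iii) each convex subproblem is solved to global optimality. Property~(iii) is immediate because \eqref{eq:convexapp:1} is convex. For~(i), Lemma~\ref{LemmaPCSI} gives that $\mathcal{F}_{m_g}^{(n)}$ and $\mathcal{P}_{l}^{(n)}$ lower-bound $\ln 2\cdot\log_2\bigl(1+\Gamma_{s,m_g}\bigr)$ and $\ln 2\cdot\log_2\bigl(1+\Gamma_{p,l}\bigr)$, whereas $\mathcal{F}_{k_g}^{(n)}$ and $\mathcal{P}_{k_p}^{(n)}$ upper-bound the corresponding eavesdropper terms; consequently \eqref{eq:convexapp:1:b}--\eqref{eq:convexapp:12:e} imply \eqref{eq:rew:2:b}--\eqref{eq:rew:2:e}, and \eqref{eq:8c} is kept verbatim. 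Hence every optimizer of \eqref{eq:convexapp:1} is feasible for \eqref{eq:rew:2}, and since \eqref{eq:rew:2} is equivalent to \eqref{eq:problem_1}, each iterate $\bigl(\mathbf{w}^{(n)},\mathbf{U}^{(n)}\bigr)$ is feasible for the original problem.

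Next I would establish that the objective is nondecreasing along the iterates. The tightness equalities \eqref{eq:srlog:3} and their companions show that, when the surrogates are built at $\bigl(\mathbf{w}^{(n)},\mathbf{U}^{(n)}\bigr)$, all four surrogate constraints reduce to the genuine constraints evaluated at that point. Therefore $\bigl(\mathbf{w}^{(n)},\mathbf{U}^{(n)}\bigr)$, together with the $\boldsymbol{t},z,\varphi$ it attains in \eqref{eq:rew:2}, is a feasible point of the $(n{+}1)$-th subproblem \eqref{eq:convexapp:1}. Solving \eqref{eq:convexapp:1} to optimality can therefore only increase the objective, so $\varphi^{(n+1)}\ge\varphi^{(n)}$, i.e. the true max--min secrecy rate evaluated along the iterates is nondecreasing. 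The power budget \eqref{eq:8c} makes the feasible set compact, and the secrecy rate is bounded above on it, so the monotone bounded sequence $\{\varphi^{(n)}\}$ converges; by compactness $\bigl\{\bigl(\mathbf{w}^{(n)},\mathbf{U}^{(n)}\bigr)\bigr\}$ admits a convergent subsequence with limit $\bigl(\mathbf{w}^{\star},\mathbf{U}^{\star}\bigr)$.

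Finally I would show that $\bigl(\mathbf{w}^{\star},\mathbf{U}^{\star}\bigr)$ is a KKT point of \eqref{eq:problem_1}. Writing the KKT system of the convex subproblem \eqref{eq:convexapp:1} at its optimizer (valid once Slater's condition is checked, which holds whenever \eqref{ini4.m} is met with strict slack), I would pass to the limit along the convergent subsequence. The crux is first-order consistency: the construction in Appendix~A makes each surrogate agree with the true function not only in value \eqref{eq:srlog:3} but also in gradient at the point of approximation, so at a fixed point $\bigl(\mathbf{w}^{(n+1)},\mathbf{U}^{(n+1)}\bigr)=\bigl(\mathbf{w}^{(n)},\mathbf{U}^{(n)}\bigr)$ the gradients of the surrogate constraints coincide with those of \eqref{eq:rew:2:b}--\eqref{eq:rew:2:e}. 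Consequently the limiting stationarity, primal-feasibility, and complementary-slackness relations of the subproblems become exactly the KKT conditions of \eqref{eq:rew:2}, hence of \eqref{eq:problem_1}.

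I expect the main obstacle to be the analytic part of this last step: verifying the gradient-matching property for the specific bounds of Appendix~A, and ensuring that the sequence of dual multipliers stays bounded so that the limiting multipliers exist. Boundedness is precisely where a constraint qualification is indispensable; I would secure it through Slater's condition for \eqref{eq:convexapp:1} together with compactness of the feasible set, which confines the multiplier sequence to a bounded set and legitimizes taking limits in the KKT system.
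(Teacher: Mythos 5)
Your proposal is correct and follows essentially the same route as the paper's Appendix~B: the same chain $\varphi(\mathbf{w}^{(n+1)},\mathbf{U}^{(n+1)})\geq\varphi^{(n)}(\mathbf{w}^{(n+1)},\mathbf{U}^{(n+1)})\geq\varphi^{(n)}(\mathbf{w}^{(n)},\mathbf{U}^{(n)})=\varphi(\mathbf{w}^{(n)},\mathbf{U}^{(n)})$ from surrogate value-dominance and tightness, then boundedness from the power budget and a convergent subsequence. The only difference is that the paper delegates the final KKT step to \cite[Theorem 1]{Marks:78}, whereas you sketch the limit-passing argument (gradient consistency, constraint qualification, multiplier boundedness) explicitly; both are the standard inner-approximation argument.
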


\begin{IEEEproof} See Appendix B.
\end{IEEEproof}

\textit{Complexity Analysis}:  We note that the proposed iterative algorithm requires solving only simple convex quadratic and linear constraints at each iteration.  We now provide the complexity analysis of Algorithm~\ref{algo:proposed:DUAL}. Specifically, in each iteration of Algorithm~\ref{algo:proposed:DUAL}, the per-iteration computational complexity of solving \eqref{eq:convexapp:1} is $\mathcal{O}(n^2\tilde{n}^{2.5}+\tilde{n}^{3.5})$, where $n = N(G+N) + G + 2$ is scalar real variables and $\tilde{n} = \sum_{g=1}^G(M_g+K_g) + K_p + L + 1$ is quadratic and linear constraints \cite{Ben:2001}.

\section{Optimal Solution With Realistic Scenario}\label{Imperfect-CSI}

\subsection{CSI Model}
In this section, we extend the optimization approach of the last section to a realistic scenario, where the instantaneous CSI between  ST and PRs is  imperfectly known and  Eves are  passive devices. Specifically, the primary and  secondary systems may not  cooperate completely in reality, and therefore the channels $\mathbf{f}_l, \forall l$ will be difficult to obtain perfectly. For instance, the PRs may be inactive  for a long period of the secondary data transmission time. Then, the CSI of the PRs can be only obtained   at the ST when the PRs is in active mode with the PT. As a result, the CSI of PRs at the ST
may be outdated when the secondary system performs the transmit strategy.  Hence, the CSI of the link between the ST and PRs is modeled as \cite{Li}
\begin{equation}\begin{aligned}
&\mathbf{f}_l=\mathbf{\hat{f}}_l+\Delta\mathbf{f}_l,\; \forall l\\
&\Omega_l\triangleq\{\Delta{\mathbf{f}}_l\in \mathbb{C}^{N\times 1}:\Delta{\mathbf{f}}_l^{H}\Delta{\mathbf{f}}_l\leq \delta^2_l\}
\end{aligned}\label{eq:imperfect:channel}\end{equation}
where $\mathbf{\hat{f}}_l$ is the channel estimate of the $l$-th PR  available at the ST,  and $\Delta\mathbf{f}_l$ represents the associated CSI error. In particular, we assume a time division duplex system with slowly time-varying channels. At the beginning of each time slot, the legitimate users (PRs, SRs)  report their channel gains to the ST. The downlink CSI of the ST-to-legitimate users are obtained by measuring the uplink pilot based on some estimation methods, such as minimum-mean-square-error (MMSE). However, the detailed method to estimate these CSIs is beyond the scope of this paper. For notational simplicity, we define $\Omega_l$ by a set of all possible CSI errors associated with the $l$-th PR. In addition, we assume that $\Delta\mathbf{f}_l$ are deterministic and bounded, and therefore $ \delta_l$ represents the size of the uncertainty region of the estimated CSI for the  $l$-th PR.

In addition, a passive Eve does not allow legitimate users to instantaneously obtain  its CSI \cite{Zhou, Nguyen, Li}, which can be justified as the following two reasons. First, to wiretap the confidential messages from both systems, the eavesdroppers require to become as a part of the communication system, i.e., knowing the channel in the downlink. Second, to wiretap a downlink channel without being removed from the system, an eavesdropper has to protect its visibility from the ST without exposing its CSI, for example, not responding its calls (like a passive user). 
 For the passive Eves, we further assume that the entries of $g_{k_p}$,  $\mathbf{f}_{k_p},\;\forall k_p$, $f_{k_g}$, and  $\mathbf{g}_{k_g},\;\forall k_g$, follow independent and identically distributed  (i.i.d.) Rayleigh fading, and that  the instantaneous CSI of these wiretap channels is not available at ST. These assumptions of passive Eves are commonly used in the literature \cite{Nguyen, Yang_14, Zhou, Li}. Meanwhile, the channels $\mathbf{h}_{m_g}, \forall m, g,$ are  assumed to be perfectly known since the SRs are active users in the secondary system.

\subsection{Optimization Problem Formulation}
 Based on the above setting and similar to \eqref{eq:rew:2}, the optimization problem  $\mathbf{P.1}$ can be reformulated as
\begin{IEEEeqnarray}{rCl}\label{eq:imcsi:1}
&&\mathbf{P.2}:\  \underset{\mathbf{w}, \mathbf{U}, \boldsymbol{t}, z, \varphi}{\mathrm{\mathrm{maximize}}}\quad  \varphi\IEEEyessubnumber\label{eq:imcsi:a}\\
   \st &&\   \log_2\bigl(1+\Gamma_{s,m_g}(\mathbf{w},\mathbf{U})\bigl) - t_{g} \geq  \varphi, m_g\in\mathcal{S}_{g},g\in\mathcal{G} \IEEEyessubnumber\label{eq:imcsi:b} \\
	&&  \max_{\mathbf{g}_{k_g}, f_{k_g}}\log_2\bigl(1+\Gamma_{e,k_g}(\mathbf{w},\mathbf{U})\bigl) \leq t_g, k_g\in\mathcal{K}_{e,g}, g\in\mathcal{G} \IEEEyessubnumber\label{eq:imcsi:c} \qquad\\         
	&& \ \min_{\Delta\mathbf{f}_l\in\Omega_l}\log_2\bigl(1+\Gamma_{p,l}(\mathbf{w},\mathbf{U})\bigl) - z\geq \bar{R}_{p,l},\ l\in\mathcal{L} \IEEEyessubnumber\label{eq:imcsi:d}\\
	&&  \ \max_{\mathbf{g}_{k_p}, f_{k_p}}\log_2\bigl(1+\Gamma_{e,k_p}(\mathbf{w},\mathbf{U})\bigl) \leq z,\ k_p\in\mathcal{K}_p\IEEEyessubnumber\label{eq:imcsi:e}\\
     		  &&\quad  \eqref{eq:8c} \IEEEyessubnumber\label{eq:imcsi:f}
\end{IEEEeqnarray}
{\color{black}where  $\boldsymbol{t}\triangleq \{t_{g}\}$ and $z$ are the maximum allowable rates for Eves in decoding the information signals from the ST and the PT, respectively, which were defined in \eqref{eq:rew:1}; $\varphi$ is objective variable to maximize the secrecy rate of the secondary system, which was also defined in \eqref{eq:rew:2}}. Observe that \eqref{eq:imcsi:b} is well presented in \eqref{eq:rew:2:b:equi}. It is now clear that the difficulty in solving \eqref{eq:imcsi:1} is due to \eqref{eq:imcsi:c}-\eqref{eq:imcsi:e} since the remaining constraints are convex and approximate convex. Instead of this, we can find a sub-optimal solution
of \eqref{eq:imcsi:1} as follows
\begin{IEEEeqnarray}{rCl}\label{eq:imcsi:2}
&&\underset{\mathbf{w}, \mathbf{U}, \boldsymbol{t}, z, \varphi, \boldsymbol{\phi},\boldsymbol{\alpha}, \beta}{\mathrm{\mathrm{maximize}}}\quad  \varphi\IEEEyessubnumber\label{eq:imcsi:2:a}\\
   \st &&\quad   
	\log_2\bigl(1+\phi_g\bigl) \leq t_g,\  g\in\mathcal{G} \IEEEyessubnumber\label{eq:imcsi:2:b} \\   
			&&\quad	\Pr\Bigl(\underset{k_g\in\mathcal{K}_{e,g}}{\max}\ \Gamma_{e,k_g}(\mathbf{w},\mathbf{U})\leq \phi_g\Bigl) \geq \epsilon_g ,\  g\in\mathcal{G} \IEEEyessubnumber\label{eq:imcsi:2:c} \\   
	&& \quad\log_2\bigl(1+\alpha_l\bigl) - z\geq \bar{R}_{p,l},\ l\in\mathcal{L} \IEEEyessubnumber\label{eq:imcsi:2:d}\\
	&& \quad\min_{\Delta\mathbf{f}_l\in\Omega_l}\Gamma_{p,l}(\mathbf{w},\mathbf{U}) \geq \alpha_l,\ l\in\mathcal{L} \IEEEyessubnumber\label{eq:imcsi:2:e}\\
	&&  \quad \log_2\bigl(1+\beta\bigl) \leq z\IEEEyessubnumber\label{eq:imcsi:2:f}\\
	&&  \quad \Pr\Bigl(\underset{k_p\in\mathcal{K}_p}{\max}\Gamma_{e,k_p}(\mathbf{w},\mathbf{U}) \leq \beta \Bigl) \geq\tilde{\epsilon} \IEEEyessubnumber\label{eq:imcsi:2:g}\\
     		  &&\quad  \eqref{eq:8c}, \eqref{eq:imcsi:b} \IEEEyessubnumber\label{eq:imcsi:2:h}
\end{IEEEeqnarray}
where $\boldsymbol{\phi}=\{\phi_g\}$, $\boldsymbol{\alpha}=\{\alpha_l\}$, and $\beta$ are newly introduced variables. The constraint \eqref{eq:imcsi:2:e} is imposed to ensure that for a given CSI error set $\Omega_l$, the minimum received SINR at the $l$-th	PR is larger than the minimum SINR requirement $\alpha_l$. According to \eqref{eq:imcsi:2:c} and \eqref{eq:imcsi:2:g}, the probabilities that the maximum received SINR at the $k_g$-th passive Eve and at the $k_p$-th passive Eve are  less than $\phi_{g} > 0$ and $\beta > 0$ are ensured to be greater than $\epsilon_{g}  $ and $\tilde{\epsilon} $, respectively.  To ensure secure communications of the primary system (secondary system), it is required for  $\tilde{\epsilon}$ $(\epsilon_g)$ to be large enough (close to 1).

\subsection{Proposed Solution}

 We are now in position to expose the hidden convexity of the constraint  of \eqref{eq:imcsi:2:c}, \eqref{eq:imcsi:2:e}, and \eqref{eq:imcsi:2:g}. Since $\mathbf{U}$ does not require a rank-constraint matrix, we introduce $ \widetilde{\mathbf{U}} \triangleq \mathbf{U}\mathbf{U}^H$ to facilitate the optimization problem.   Let us handle the constraint  \eqref{eq:imcsi:2:e} first by rewriting it as 
\begin{equation}\label{eq:imcsi:2:e:1}
\max_{\Delta\mathbf{f}_l\in\Omega_l}\sum\nolimits_{g=1}^G|\mathbf{f}_l^{H}\mathbf{w}_g|^2+ \tr(\mathbf{f}_l^{H} \widetilde{\mathbf{U}}\mathbf{f}_l)+\sigma_l^2 \leq \frac{P_p|h_l|^2}{\alpha_l}, l\in\mathcal{L}.
\end{equation} 
For arbitrary $l$-th PR, \eqref{eq:imcsi:2:e:1} can be shaped to take the following equivalent form
\begin{IEEEeqnarray}{rCl}\label{eq:imcsi:2:e:2}
\sum_{g=1}^G\mu_{l,g} + \tilde{\mu}_l +\sigma_l^2 &\leq& \frac{P_p|h_l|^2}{\alpha_l}\label{eq:imcsi:2:e:2a}, l\in\mathcal{L}\\
\max_{\Delta\mathbf{f}_l\in\Omega_l}|\mathbf{f}_l^{H}\mathbf{w}_g|^2 &\leq&  \mu_{l,g},  l\in\mathcal{L}, g\in\mathcal{G}\label{eq:imcsi:2:e:2b}\\
\max_{\Delta\mathbf{f}_l\in\Omega_l} \tr(\mathbf{f}_l^{H} \widetilde{\mathbf{U}}\mathbf{f}_l) &\leq & \tilde{\mu}_l, l\in\mathcal{L}\label{eq:imcsi:2:e:2c}
\end{IEEEeqnarray} 
where $\boldsymbol{\mu}_l=\{\mu_{l,g}\}$ and $\boldsymbol{\tilde{\mu}}=\{\tilde{\mu}_l\}$ are new variables. Note that both sides of \eqref{eq:imcsi:2:e:2a} are convex, so it is iteratively replaced by the following linear constraint
\begin{IEEEeqnarray}{rCl}
\sum_{g=1}^G\mu_{l,g} + \tilde{\mu}_l +\sigma_l^2 &\leq& \frac{2P_p|h_l|^2}{\alpha_l^{(n)}} - \frac{P_p|h_l|^2}{(\alpha_l^{(n)})^2}\alpha_l,\label{eq:imcsi:2:e:2a1}l\in\mathcal{L}.
\end{IEEEeqnarray} 
To make the tractable form of \eqref{eq:imcsi:2:e:2b} and \eqref{eq:imcsi:2:e:2c}, we first transform these constraints into a  matrix inequality  based on the following lemma.
\begin{lemma}\label{lemma:Spro}
\textsl{(S-Procedure} \cite{Stephen}): Let $f_m(\mathbf{x}) = \mathbf{x}^{H}\mathbf{A}_m\mathbf{x} + 2\mbox{Re}\{\mathbf{b}^{H}_m\mathbf{x}\} + c_m$, where $m=\{1,2\}$, $\mathbf{A}_m\in\mathbb{H}^{N}, \mathbf{b}_m\in\mathbb{C}^{N\times 1}$ and $c_m\in\mathbb{R}$. Then there exists a $\mathbf{\hat{x}}$ such that $f_z(\mathbf{\hat{x}}) < 0$ satisfies: $f_1(\mathbf{x})\leq 0$ $\Rightarrow$ $f_2(\mathbf{x})\leq 0$ if and only if there exists  $\omega\geq 0$ such that
\begin{equation}
\omega\begin{bmatrix}
    \mathbf{A}_1       & \mathbf{b}_1\\
    \mathbf{b}_1^{H}       & c_1\\
\end{bmatrix}
-
\begin{bmatrix}
    \mathbf{A}_2 & \mathbf{b}_2\\
    \mathbf{b}_2^{H} & c_2\\
 \end{bmatrix}\succeq\mathbf{0}.
\end{equation}
 \end{lemma}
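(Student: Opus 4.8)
This is the classical S-lemma, and the plan is to prove it by homogenising the two inhomogeneous forms, settling the elementary direction by inspection, and obtaining the converse from the convexity of the joint range of two Hermitian forms combined with a separating-hyperplane argument in which the Slater point $\hat{\mathbf{x}}$ with $f_z(\hat{\mathbf{x}})<0$ plays the decisive role.

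First I would homogenise. With $\tilde{\mathbf{x}}\triangleq[\mathbf{x}^{T},1]^{T}\in\mathbb{C}^{N+1}$ and the bordered Hermitian matrices
\begin{equation}
\mathbf{M}_m\triangleq\begin{bmatrix}\mathbf{A}_m & \mathbf{b}_m\\ \mathbf{b}_m^{H} & c_m\end{bmatrix}\in\mathbb{H}^{N+1},\qquad m\in\{1,2\},
\end{equation}
the identity $\mathbf{x}^{H}\mathbf{b}_m+\mathbf{b}_m^{H}\mathbf{x}=2\Re\{\mathbf{b}_m^{H}\mathbf{x}\}$ gives $f_m(\mathbf{x})=\tilde{\mathbf{x}}^{H}\mathbf{M}_m\tilde{\mathbf{x}}$, so each affine-quadratic form becomes a homogeneous quadratic form restricted to the slice $\{\tilde{\mathbf{x}}:\tilde{x}_{N+1}=1\}$. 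The sufficiency direction is then immediate: given $\omega\geq0$ with $\omega\mathbf{M}_1-\mathbf{M}_2\succeq\mathbf{0}$, for every $\mathbf{x}$ we have $\omega f_1(\mathbf{x})-f_2(\mathbf{x})=\tilde{\mathbf{x}}^{H}(\omega\mathbf{M}_1-\mathbf{M}_2)\tilde{\mathbf{x}}\geq0$, whence $f_1(\mathbf{x})\leq0$ forces $f_2(\mathbf{x})\leq\omega f_1(\mathbf{x})\leq0$.

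For necessity I would study the joint range
\begin{equation}
\mathcal{D}\triangleq\bigl\{\bigl(\tilde{\mathbf{x}}^{H}\mathbf{M}_1\tilde{\mathbf{x}},\,\tilde{\mathbf{x}}^{H}\mathbf{M}_2\tilde{\mathbf{x}}\bigr):\tilde{\mathbf{x}}\in\mathbb{C}^{N+1}\bigr\}\subseteq\mathbb{R}^2.
\end{equation}
The essential structural fact is that $\mathcal{D}$ is a convex cone: for a single pair of complex Hermitian forms the joint range is convex by the Toeplitz--Hausdorff theorem (the complex analogue of Dines's theorem for two real forms), and this is exactly what upgrades the lemma from a one-sided bound to an equivalence. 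Since every $\tilde{\mathbf{x}}$ with nonzero last coordinate rescales onto the slice $\tilde{x}_{N+1}=1$, the hypothesised implication $f_1\leq0\Rightarrow f_2\leq0$ says that $\mathcal{D}$ avoids the open set $\{(u,v):u\leq0,\,v>0\}$. Separating the convex cone $\mathcal{D}$ from this set by a hyperplane through the origin produces $(\lambda,\eta)\neq(0,0)$ with $\lambda u+\eta v\geq0$ on $\mathcal{D}$, i.e. $\lambda\mathbf{M}_1+\eta\mathbf{M}_2\succeq\mathbf{0}$, and the supporting argument on the bad set forces $\lambda\geq0$, $\eta\leq0$. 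Normalising $\eta=-1$ and setting $\omega=\lambda$ then yields $\omega\mathbf{M}_1-\mathbf{M}_2\succeq\mathbf{0}$. Equivalently, one may phrase the same step as strong duality for the semidefinite relaxation $\sup\{\tr(\mathbf{M}_2\mathbf{X}):\mathbf{X}\succeq\mathbf{0},\ \tr(\mathbf{M}_1\mathbf{X})\leq0,\ \mathbf{X}_{N+1,N+1}=1\}$, with $\omega$ the multiplier attached to $\tr(\mathbf{M}_1\mathbf{X})\leq0$.

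The hard part, and the only place the strict-feasibility assumption is used, is controlling the homogeneous ``directions at infinity'' ($\tilde{x}_{N+1}=0$), where $\tilde{\mathbf{x}}^{H}\mathbf{M}_m\tilde{\mathbf{x}}=\mathbf{y}^{H}\mathbf{A}_m\mathbf{y}$ is not constrained by the implication on the slice, and then excluding a degenerate separator. I expect to handle both with the Slater point: probing along $\hat{\mathbf{x}}+\tau\mathbf{y}$ as $\tau\to\infty$, the term $f_1(\hat{\mathbf{x}})<0$ keeps $f_1\leq0$ whenever $\mathbf{y}^{H}\mathbf{A}_1\mathbf{y}\leq0$, so the implication forces $\mathbf{y}^{H}\mathbf{A}_2\mathbf{y}\leq0$ and the infinity directions also miss $\{u\leq0,\,v>0\}$, securing $\mathcal{D}\cap\{u\leq0,\,v>0\}=\emptyset$; and were the separator vertical ($\eta=0$), we would get $\mathbf{M}_1\succeq\mathbf{0}$, contradicting $f_1(\hat{\mathbf{x}})<0$, so in fact $\eta<0$ and $\omega=\lambda/|\eta|\geq0$ is legitimate. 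I would keep the argument confined to the two-form, complex-Hermitian setting in which the statement is posed, since the underlying joint-range convexity (and hence the exactness of the relaxation) generally fails for three or more real forms.
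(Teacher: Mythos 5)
The paper does not prove this lemma at all: it is imported verbatim from \cite{Stephen} as a standard tool (the statement as printed is even slightly garbled --- ``$f_z(\hat{\mathbf{x}})<0$'' should read ``$f_1(\hat{\mathbf{x}})<0$,'' the Slater condition, which you correctly interpreted). So there is no in-paper argument to compare against; your write-up is the standard textbook proof of the complex S-lemma, and it is essentially correct: homogenisation, the trivial sufficiency direction, and necessity via convexity of the joint range of two Hermitian forms plus a separating hyperplane, with the Slater point ruling out the degenerate separator $\eta=0$. The one spot to tighten is the ``directions at infinity'' step: when $\mathbf{y}^{H}\mathbf{A}_1\mathbf{y}=0$ the expansion $f_1(\hat{\mathbf{x}}+\tau\mathbf{y})=f_1(\hat{\mathbf{x}})+2\tau\Re\{(\mathbf{A}_1\hat{\mathbf{x}}+\mathbf{b}_1)^{H}\mathbf{y}\}$ need not stay nonpositive for large $\tau$ if the linear term has positive real part, so the claim ``$f_1(\hat{\mathbf{x}})<0$ keeps $f_1\leq 0$'' fails for $\mathbf{y}$ itself; in the complex setting this is repaired by replacing $\mathbf{y}$ with $e^{\mathrm{i}\theta}\mathbf{y}$ for a phase making the linear term nonpositive (which leaves both quadratic forms $\mathbf{y}^{H}\mathbf{A}_m\mathbf{y}$ unchanged), after which dividing by $\tau^{2}$ gives $\mathbf{y}^{H}\mathbf{A}_2\mathbf{y}\leq 0$ as you intended. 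With that detail added, the argument is complete for the two-form Hermitian case used in the paper.
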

Substituting $\mathbf{f}_l=\mathbf{\hat{f}}_l+\Delta\mathbf{f}_l, \forall l$ into \eqref{eq:imcsi:2:e:2b} and applying Lemma \ref{lemma:Spro}, then        
\begin{equation}
\begin{aligned}
            &\Delta\mathbf{f}_l^{H}\Delta\mathbf{f}_l-\delta_l^2\leq0\\
\Rightarrow \eqref{eq:imcsi:2:e:2b}:\;&\Delta\mathbf{f}_l^{H}\mathbf{w}_g\mathbf{w}_g^H\Delta\mathbf{f}_l+2\Re\{\mathbf{\hat{f}}_l^{H}\mathbf{w}_g\mathbf{w}_g^H\Delta\mathbf{f}_l\}  \\
&+\mathbf{\hat{f}}_l^{H}\mathbf{w}_g\mathbf{w}_g^H\mathbf{\hat{f}}_l-\mu_{l,g}\leq 0
\end{aligned}\end{equation}
holds if and only if there exists $\boldsymbol{\omega}_{l}=\{\omega_{l,g}\geq 0\},\;\forall l$, so that the following matrix inequality constraint holds  
\begin{equation}\begin{aligned}
\begin{bmatrix}
    \omega_{l,g}\mathbf{I}_N-\mathbf{w}_g\mathbf{w}_g^H & -\mathbf{w}_g\mathbf{w}_g^H\mathbf{\hat{f}}_l\\
    -\mathbf{\hat{f}}_l^{H}\mathbf{w}_g\mathbf{w}_g^H & -\mathbf{\hat{f}}_l^{H}\mathbf{w}_g\mathbf{w}_g^H\mathbf{\hat{f}}_l-\omega_{l,g}\delta_l^2+\mu_{l,g}
 \end{bmatrix}
\succeq\mathbf{0}.
\end{aligned}\label{eq:LMI:constraint:2}\end{equation}
However, \eqref{eq:LMI:constraint:2} is still not in a tractable form. At this point, we apply the application of Schur's complement lemma \cite[Eq.~(7.2.6)]{Tal:09} to obtain the following linear matrix inequality (LMI)
\begin{equation}\begin{aligned}
&\exists \omega_{l,g} \geq 0: \mathbf{C}_{l,g}(\mathbf{w}_g,\mu_{l,g},\omega_{l,g})\triangleq \\
&\begin{bmatrix}
    1 &     \mathbf{w}_g^H      &-\mathbf{w}_g^H\mathbf{\hat{f}}_l\\
		\mathbf{w}_g                                                    &  \omega_{l,g}\mathbf{I}_N  &                                                    \\
    -\mathbf{\hat{f}}_l^{H}\mathbf{w}_g & &-\omega_{l,g}\delta_l^2+\mu_{l,g}
 \end{bmatrix}
\succeq\mathbf{0},\ g\in\mathcal{G}, l\in\mathcal{L}.
\end{aligned}\label{eq:LMI:constraint:3}\end{equation}
It is also worth noting that constraint \eqref{eq:LMI:constraint:3} now includes only a finite number of constraints. 

Analogously, with $\tilde{\boldsymbol{\omega}}=\{\tilde{\omega}_{l}\geq 0\},\;$ the constraint \eqref{eq:imcsi:2:e:2c} admits the following representation
\begin{equation}\begin{aligned}
&\exists \tilde{\omega}_l \geq 0: \tilde{\mathbf{C}}_{l}(\widetilde{\mathbf{U}},\tilde{\mu}_{l},\tilde{\omega}_{l})\triangleq\\
&\begin{bmatrix}
    \tilde{\omega}_{l}\mathbf{I}_N - \widetilde{\mathbf{U}} &- \widetilde{\mathbf{U}}\mathbf{\hat{f}}_l\\
		-\mathbf{\hat{f}}_l^{H} \widetilde{\mathbf{U}} &-\mathbf{\hat{f}}_l^{H} \widetilde{\mathbf{U}}\mathbf{\hat{f}}_l-\tilde{\omega}_{l}\delta_l^2+\tilde{\mu}_{l}
 \end{bmatrix}
\succeq\mathbf{0},\ l\in\mathcal{L}.
\end{aligned}\label{eq:LMI:constraint:4}\end{equation}

To deal with the nonconvex constraints given in  \eqref{eq:imcsi:2:g} and \eqref{eq:imcsi:2:c}, we provide the following two lemmas. 
\begin{lemma}\label{lemma:PR}
For the primary system, the constraint  \eqref{eq:imcsi:2:g} is transformed to a new constraint as
\begin{equation}
\lambda_{\min}\left(\sum\nolimits_{g=1}^G\mathbf{w}_g\mathbf{w}_g^H +\widetilde{\mathbf{U}}\right) \geq \tilde{\xi}(\beta)
\label{eq:lemma:2}\end{equation}
where $\tilde{\xi}(\beta) \triangleq  \Bigl(\exp\bigl(-\frac{\beta}{NP_p}\sigma_{k_p}^2\bigr)/(1-\tilde{\epsilon}^{1/K_p})^{1/N}-1\Bigl)\frac{P_p}{\beta}.$
\end{lemma}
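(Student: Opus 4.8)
The plan is to convert the probabilistic (secrecy-outage) constraint \eqref{eq:imcsi:2:g} into a deterministic one by computing the underlying probability in closed form and then bounding it through the smallest eigenvalue of $\mathbf{R}\triangleq\sum_{g=1}^G\mathbf{w}_g\mathbf{w}_g^H+\widetilde{\mathbf{U}}$. First I would rewrite the eavesdropper SINR \eqref{eq:SINR:pre} compactly: since $\sum_{g}|\mathbf{f}_{k_p}^H\mathbf{w}_g|^2+\|\mathbf{f}_{k_p}^H\mathbf{U}\|^2=\mathbf{f}_{k_p}^H\mathbf{R}\mathbf{f}_{k_p}$, we have $\Gamma_{e,k_p}=P_p|g_{k_p}|^2/(\mathbf{f}_{k_p}^H\mathbf{R}\mathbf{f}_{k_p}+\sigma_{k_p}^2)$. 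Because the channels of distinct passive Eves are i.i.d., the event $\{\max_{k_p}\Gamma_{e,k_p}\le\beta\}$ factorizes, so $\Pr(\max_{k_p}\Gamma_{e,k_p}\le\beta)=\prod_{k_p}\Pr(\Gamma_{e,k_p}\le\beta)=[\Pr(\Gamma_{e,1}\le\beta)]^{K_p}$, which reduces \eqref{eq:imcsi:2:g} to the per-Eve requirement $\Pr(\Gamma_{e,k_p}\le\beta)\ge\tilde{\epsilon}^{1/K_p}$.

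Next I would evaluate this single-Eve probability. Conditioning on $\mathbf{f}_{k_p}$ and using that $|g_{k_p}|^2$ is unit-mean exponential (the entries are i.i.d.\ Rayleigh), $\Pr(\Gamma_{e,k_p}\le\beta\mid\mathbf{f}_{k_p})=1-\exp(-\tfrac{\beta}{P_p}(\mathbf{f}_{k_p}^H\mathbf{R}\mathbf{f}_{k_p}+\sigma_{k_p}^2))$. Taking the expectation over $\mathbf{f}_{k_p}\sim\mathcal{CN}(\mathbf{0},\mathbf{I}_N)$ and invoking the moment generating function of a complex Gaussian quadratic form, $\mathbb{E}[\exp(-s\,\mathbf{f}_{k_p}^H\mathbf{R}\mathbf{f}_{k_p})]=1/\det(\mathbf{I}_N+s\mathbf{R})=\prod_{i=1}^N(1+s\lambda_i(\mathbf{R}))^{-1}$, yields the exact closed form $\Pr(\Gamma_{e,k_p}\le\beta)=1-\exp(-\tfrac{\beta\sigma_{k_p}^2}{P_p})\prod_{i=1}^N(1+\tfrac{\beta}{P_p}\lambda_i(\mathbf{R}))^{-1}$. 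Substituting this into the per-Eve requirement and rearranging turns it into $\prod_{i=1}^N(1+\tfrac{\beta}{P_p}\lambda_i(\mathbf{R}))\ge e^{-\beta\sigma_{k_p}^2/P_p}/(1-\tilde{\epsilon}^{1/K_p})$.

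The final and conceptually most delicate step is to replace this product-over-eigenvalues inequality by a single constraint on $\lambda_{\min}(\mathbf{R})$. Since every factor obeys $1+\tfrac{\beta}{P_p}\lambda_i(\mathbf{R})\ge 1+\tfrac{\beta}{P_p}\lambda_{\min}(\mathbf{R})$, the product is lower-bounded by $(1+\tfrac{\beta}{P_p}\lambda_{\min}(\mathbf{R}))^N$; hence enforcing $(1+\tfrac{\beta}{P_p}\lambda_{\min}(\mathbf{R}))^N\ge e^{-\beta\sigma_{k_p}^2/P_p}/(1-\tilde{\epsilon}^{1/K_p})$ is sufficient for the original requirement. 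Taking the $N$-th root and isolating $\lambda_{\min}(\mathbf{R})$ then gives exactly \eqref{eq:lemma:2} with $\tilde{\xi}(\beta)$ as defined.

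I expect the main obstacle to be precisely this product-to-$\lambda_{\min}$ reduction: it is what makes the constraint tractable, since a lower bound on the concave functional $\lambda_{\min}(\cdot)$ is amenable to convex handling, but it is only a \emph{sufficient} condition. This is what introduces the conservativeness that renders \eqref{eq:imcsi:2} a sub-optimal surrogate rather than an exact reformulation of \eqref{eq:imcsi:1}. A secondary point worth verifying carefully is the normalization of the fading variances (unit-mean $|g_{k_p}|^2$ and $\mathbf{f}_{k_p}\sim\mathcal{CN}(\mathbf{0},\mathbf{I}_N)$), since any deviation would rescale the constant multiplying $\beta$ inside $\tilde{\xi}(\beta)$.
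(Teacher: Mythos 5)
Your proposal is correct and reaches exactly the same sufficient condition, but by a genuinely different route than the paper. The paper first applies the trace/eigenvalue inequality $\tr(\mathbf{F}_{k_p}\mathbf{A})\geq\tr(\mathbf{F}_{k_p})\,\lambda_{\min}(\mathbf{A})$ (with $\mathbf{F}_{k_p}=\mathbf{f}_{k_p}\mathbf{f}_{k_p}^H$ rank one) \emph{inside} the probability, and only then integrates the resulting two-scalar event $\{y\leq x\lambda_{\min}(\mathbf{A})+\sigma_{k_p}^2\}$ against the chi-squared density of $x=\|\mathbf{f}_{k_p}\|^2$ and the exponential density of $y=\frac{P_p}{\beta}|g_{k_p}|^2$, invoking a table integral to get $1-e^{-\beta\sigma_{k_p}^2/P_p}\bigl(1+\frac{\beta}{P_p}\lambda_{\min}(\mathbf{A})\bigr)^{-N}$. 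You instead compute the outage probability \emph{exactly} via the moment generating function of the Gaussian quadratic form, $\mathbb{E}[e^{-s\mathbf{f}^H\mathbf{R}\mathbf{f}}]=\det(\mathbf{I}_N+s\mathbf{R})^{-1}$, and only afterwards relax $\prod_i\bigl(1+\frac{\beta}{P_p}\lambda_i(\mathbf{R})\bigr)\geq\bigl(1+\frac{\beta}{P_p}\lambda_{\min}(\mathbf{R})\bigr)^N$. The two relaxations produce the identical lower bound on the probability, hence the identical $\tilde{\xi}(\beta)$; your version has the merit of exhibiting the exact closed-form probability first, so the conservativeness flagged in Remark 2 is localized precisely to the collapse of the spectrum of $\mathbf{R}$ onto $\lambda_{\min}$, whereas the paper's version avoids the determinant identity and works only with scalar distributions. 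Your handling of the $K_p$-fold factorization and of the unit-variance Rayleigh normalization matches the paper's.
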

\begin{proof}
See Appendix C.
\end{proof}
In  Lemma \ref{lemma:PR}, the claim is clearly true in the trivial case of $\beta \rightarrow \infty$, i.e., the primary system is  inactive, which leads to $\sum_{g=1}^G\mathbf{w}_g\mathbf{w}_g^H +\widetilde{\mathbf{U}}\succeq\mathbf{0}$. This is always true and thus confirms our analysis. Next, we rewrite \eqref{eq:lemma:2} equivalently in the form of
\begin{IEEEeqnarray}{rCl}
2\ln\eta + \beta\frac{\sigma_{k_p}^2}{NP_p} &\geq & 0 \label{eq:lemma2:a}\\
\bigl(\eta^2/(1-\tilde{\epsilon}^{1/K_p})^{1/N} -1\bigr)P_p &\leq& \beta \theta  \label{eq:lemma2:b}\\
\lambda_{\min}\Bigr(\sum\nolimits_{g=1}^G\mathbf{w}_g\mathbf{w}_g^H +\widetilde{\mathbf{U}}\Bigl) &\geq& \theta \label{eq:lemma2:c}
\end{IEEEeqnarray}
where $\theta$ and $\eta$ are newly introduced variables. Since the constraints \eqref{eq:lemma2:a} and \eqref{eq:lemma2:b} are convex, and we now focus on the remaining nonconvex constraint. 
In  \eqref{eq:lemma2:c}, we note that both $\sum\nolimits_{g=1}^G\mathbf{w}_g\mathbf{w}_g^H$ and $\widetilde{\mathbf{U}}$ are Hermitian matrices. In addition,  the eigenvalues of a Hermitian matrix $\mathbf{Q}$ are real and satisfy $\tr(\mathbf{x}^H\mathbf{Q}^H\mathbf{x}) \ge \lambda\|\mathbf{x}\|^2$ for any given vector $\mathbf{x}$ if and only if $\lambda_{\min}(\mathbf{Q}) \geq \lambda$. Since $\lambda_{\min}(\mathbf{w}_g\mathbf{w}_g^H)=0$ for all $g$, the lower bound of left side of \eqref{eq:lemma2:c} is given by
\begin{IEEEeqnarray}{rCl}
\lambda_{\min}\Bigr(\sum\nolimits_{g=1}^G\mathbf{w}_g\mathbf{w}_g^H +\widetilde{\mathbf{U}}\Bigl)\geq \lambda_{\min}(\widetilde{\mathbf{U}}). \label{eq:lemma2:c1}
\end{IEEEeqnarray}
 The implication of \eqref{eq:lemma2:c1} is that the ST will degrade  the eavesdropper's channel by transmitting jamming noise rather than the desired signals. From \eqref{eq:lemma2:c}, it follows that 
\begin{IEEEeqnarray}{rCl}
 \lambda_{\min}(\widetilde{\mathbf{U}}) \geq \theta \Leftrightarrow \widetilde{\mathbf{U}} \succeq \mathbf{I}_N\theta \label{eq:lemma2:c2}.
\end{IEEEeqnarray}

\begin{lemma}\label{lemma:SR}
For the secondary system, the constraint  \eqref{eq:imcsi:2:c}  is transformed to  a new constraint as
\begin{equation}
\frac{\|\mathbf{w}_g\|^2}{\phi_g} \leq \xi_g  + \sum_{i=1,i\neq g}^G\|\mathbf{w}_i\|^2 + \lambda_{\min}(\widetilde{\mathbf{U}}),\ g\in\mathcal{G}
\label{eq:lemma:3}\end{equation}
where $\xi_g\triangleq\Bigl[\exp\Bigl(\frac{\sigma_{k_g}^2}{NP_p}\Bigr)\epsilon_{g}^{-1/NK_g}-1\Bigr]P_p.$
\end{lemma}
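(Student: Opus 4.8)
The plan is to turn the probabilistic constraint \eqref{eq:imcsi:2:c} into a deterministic one by exploiting the i.i.d. Rayleigh assumption on the passive eavesdroppers' channels, mirroring the treatment of the primary eavesdroppers in Lemma~\ref{lemma:PR}. First I would use the independence of the channel pairs $\{\mathbf{g}_{k_g},f_{k_g}\}_{k_g\in\mathcal{K}_{e,g}}$ across the $K_g$ eavesdroppers to factorize
\begin{equation}
\Pr\Bigl(\max_{k_g\in\mathcal{K}_{e,g}}\Gamma_{e,k_g}(\mathbf{w},\mathbf{U})\le\phi_g\Bigr)=\prod_{k_g\in\mathcal{K}_{e,g}}\Pr\bigl(\Gamma_{e,k_g}(\mathbf{w},\mathbf{U})\le\phi_g\bigr).
\end{equation}
Since the factors are identical, the requirement $\ge\epsilon_g$ is equivalent to the per-eavesdropper condition $\Pr(\Gamma_{e,k_g}\le\phi_g)\ge\epsilon_g^{1/K_g}$, which exposes where the root $\epsilon_g^{-1/(NK_g)}$ in $\xi_g$ will originate.

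Next I would isolate the PT-to-Eve contribution. Rewriting $\Gamma_{e,k_g}\le\phi_g$ from \eqref{eq:SINR:se} as a lower bound on $P_p|f_{k_g}|^2$, and noting that $|f_{k_g}|^2$ is exponentially distributed and independent of $\mathbf{g}_{k_g}$, I would condition on $\mathbf{g}_{k_g}$, apply the exponential CDF, and then average over $\mathbf{g}_{k_g}\sim\mathcal{CN}(\mathbf{0},\mathbf{I}_N)$. To obtain a closed form I would reduce the remaining quadratic forms to scalars: Cauchy--Schwarz gives $|\mathbf{g}_{k_g}^{H}\mathbf{w}_g|^2\le\|\mathbf{g}_{k_g}\|^2\|\mathbf{w}_g\|^2$, the jamming term is handled through $\mathbf{g}_{k_g}^{H}\widetilde{\mathbf{U}}\mathbf{g}_{k_g}\ge\lambda_{\min}(\widetilde{\mathbf{U}})\|\mathbf{g}_{k_g}\|^2$, and the co-channel interference $\sum_{i\ne g}\|\mathbf{w}_i\|^2$ is retained on the favorable side. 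All terms then depend on $\mathbf{g}_{k_g}$ only through $\gamma\triangleq\|\mathbf{g}_{k_g}\|^2$, which is $\mathrm{Gamma}(N,1)$-distributed; its moment generating function $\mathbb{E}[e^{-s\gamma}]=(1+s)^{-N}$ is precisely what produces the $N$-th power, and hence the $N$-th root and the factor $1/N$ appearing in the exponent $\sigma_{k_g}^2/(NP_p)$ of $\xi_g$.

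Finally I would impose the resulting closed-form lower bound on $\Pr(\Gamma_{e,k_g}\le\phi_g)$ to be at least $\epsilon_g^{1/K_g}$, take the $N$-th root, and solve the resulting scalar inequality for the threshold, which yields $\xi_g=[\exp(\sigma_{k_g}^2/(NP_p))\,\epsilon_g^{-1/(NK_g)}-1]P_p$ and, after rearrangement, exactly \eqref{eq:lemma:3}. The constraint can then be made convex just as in the primary case, by replacing $\lambda_{\min}(\widetilde{\mathbf{U}})$ with an auxiliary variable through the semidefinite condition $\widetilde{\mathbf{U}}\succeq\theta\mathbf{I}_N$ used in \eqref{eq:lemma2:c2}.

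The hard part, and the point where this departs from Lemma~\ref{lemma:PR}, is the coupling between numerator and denominator: unlike the primary eavesdropper, whose desired term $P_p|g_{k_p}|^2$ is independent of the interference-plus-noise quadratic form, here the single vector $\mathbf{g}_{k_g}$ carries both the desired signal $\mathbf{w}_g$ and the interferers and jammer, so the clean determinant/MGF factorization of Appendix~C does not apply verbatim. The delicate step is to orient each scalar reduction so that the bound remains \emph{sufficient}---i.e. so that feasibility of \eqref{eq:lemma:3} guarantees the chance constraint \eqref{eq:imcsi:2:c}---while also checking the validity region of the Gamma moment generating function and the truncation of the exponential tail that occurs when the effective threshold becomes negative.
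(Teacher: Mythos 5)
Your proposal follows essentially the same route as the paper's Appendix~D: factorize over the $K_g$ independent eavesdroppers to get the per-Eve threshold $\epsilon_g^{1/K_g}$, reduce every quadratic form in $\mathbf{g}_{k_g}$ to a scalar multiple of $\|\mathbf{g}_{k_g}\|^2$ (your Cauchy--Schwarz and $\lambda_{\min}(\widetilde{\mathbf{U}})$ steps are exactly the paper's rank-one trace/eigenvalue inequalities for $\mathbf{G}_{k_g}$), then integrate the exponential CDF of $P_p|f_{k_g}|^2$ against the $\mathrm{Gamma}(N,1)$ law of $\|\mathbf{g}_{k_g}\|^2$ --- your MGF remark is the paper's closed-form integral yielding the $[\,\cdot\,]^{-N}$ factor --- and solve for the threshold $\xi_g$. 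The orientation worry you raise about the co-channel interference term is handled (indeed, glossed over) identically in the paper's chain \eqref{eq:proba:second:2}, so your plan reproduces the published argument.
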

\begin{proof}
See Appendix D.
\end{proof}
The formulation in  \eqref{eq:lemma:3} can be further shaped to take the following convex constraints
\begin{IEEEeqnarray}{rCl}
&&\frac{\|\mathbf{w}_g\|^2}{\phi_g} \leq \xi_g  + \sum_{i=1,i\neq g}^G 2\Re\{(\mathbf{w}_i^{(n)})^H\mathbf{w}_i\} \nonumber\\
&&\qquad\qquad\qquad- \sum_{i=1,i\neq g}^G\|\mathbf{w}_i^{(n)}\|^2 + \vartheta,\ g\in\mathcal{G} \label{eq:lemma:3a}\\
&&\lambda_{\min}(\widetilde{\mathbf{U}}) \geq \vartheta \Leftrightarrow  \widetilde{\mathbf{U}} \succeq \mathbf{I}_N \vartheta \label{eq:lemma:3c}
\end{IEEEeqnarray}
where $\vartheta$  is newly introduced variable.
\begin{remark}
We note that the new constraints in \eqref{eq:lemma:2} and \eqref{eq:lemma:3} are not equivalent to \eqref{eq:imcsi:2:g} and \eqref{eq:imcsi:2:c}. Specifically, the optimal solutions for the former are also feasible for the latter, respectively, but not vice versa due to the inequalities in \eqref{eq:proba:pri:2} and \eqref{eq:proba:second:2}, and thus this leads to a lower bound of the system performance.
\end{remark}
{\color{black}\begin{remark}
In this paper, the wiretap channels are modeled as i.i.d. Rayleigh random variables. Nevertheless, a different continuous channel distribution does not affect the type of  constraints in \eqref{eq:lemma:2} and \eqref{eq:lemma:3}. In other words, the proposed convex approximation is still applicable to any continuous channel distribution thanks to widespread applications of inner approximation method \cite{Marks:78}. Therefore, our study is valid without loss of generality.
\end{remark}}

With the above discussions, the approximate convex problem solved at the $(n+1)$-th iteration of the  proposed design is given by
\begin{IEEEeqnarray}{rCl}\label{eq:imcsi:3}
&&\underset{\substack{\mathbf{w}, \widetilde{\mathbf{U}}\succeq\mathbf{0}, \boldsymbol{t}, z, \varphi, \boldsymbol{\phi},\boldsymbol{\alpha},\\ \beta, \boldsymbol{\mu}_l,\tilde{\boldsymbol{\mu}}, \boldsymbol{\omega}_l,\tilde{\boldsymbol{\omega}},\theta, \eta, \vartheta}}{\mathrm{\mathrm{maximize}}}\quad  \varphi \IEEEyessubnumber\\
   && \st \quad   
	 \mathcal{F}_{m_g}^{(n)}(\mathbf{w},\widetilde{\mathbf{U}}) \geq (\varphi + t_g)\ln2,\ m_g\in\mathcal{S}_{g},g\in\mathcal{G} \IEEEyessubnumber \qquad\\
     		  &&\quad\quad  \sum\nolimits_{g=1}^G\|\mathbf{w}_g\|^2+ \tr(\widetilde{\mathbf{U}}) \leq P_{s}\, \IEEEyessubnumber\label{eq:8c:revised} \\
					&&\quad\quad   \eqref{eq:imcsi:2:b}, \eqref{eq:imcsi:2:d}, \eqref{eq:imcsi:2:f}, \eqref{eq:imcsi:2:e:2a1}, \eqref{eq:LMI:constraint:3},  \nonumber \\
					&&\quad\quad \eqref{eq:LMI:constraint:4}, \eqref{eq:lemma2:a}, \eqref{eq:lemma2:b}, \eqref{eq:lemma2:c2}, \eqref{eq:lemma:3a}, \eqref{eq:lemma:3c}. \IEEEyessubnumber\label{eq:imcsi:3:h}
\end{IEEEeqnarray}
To find an initial feasible point to \eqref{eq:imcsi:1}, we solve the following convex optimization problem
\begin{IEEEeqnarray}{rCl}\label{ipcsi:ini3.m}
&&\max_{\substack{\mathbf{w}, \widetilde{\mathbf{U}}\succeq\mathbf{0},  z,\boldsymbol{\alpha}, \beta,\\ \boldsymbol{\mu}_l,\tilde{\boldsymbol{\mu}}, \boldsymbol{\omega}_l,\tilde{\boldsymbol{\omega}},\theta,\eta }}\;\min_{l\in\mathcal{L}}\;
\Bigl\{\log_2\bigl(1+\alpha_l\bigl) - z - \bar{R}_{p,l}
  \Bigr\}\IEEEyessubnumber \label{eq:ipcsi:ini3:a}\\
	&&\quad\st\ \,  \eqref{eq:imcsi:2:f}, \eqref{eq:imcsi:2:e:2a1}, \eqref{eq:LMI:constraint:3}, \eqref{eq:LMI:constraint:4},\eqref{eq:lemma2:a}, \eqref{eq:lemma2:b}, \eqref{eq:lemma2:c2}, \eqref{eq:8c:revised} \IEEEyessubnumber \label{eq:ipcsi:ini3:c}
\end{IEEEeqnarray}
and  stop at reaching
\begin{equation}\label{ipcsi:ini4.m}
\min_{l\in\mathcal{L}}\;
\Bigl\{\log_2\bigl(1+\alpha_l\bigl) - z - \bar{R}_{p,l}
  \Bigr\}\geq 0.
\end{equation}
The proposed iterative method is outlined in Algorithm \ref{algo:proposed:DUAL:2}. In a similar manner to Proposition \ref{prop1}, we can show  that Algorithm \ref{algo:proposed:DUAL:2} yields a nondecreasing  sequence of  objective  due to  updating the involved variables after each iteration.
\begin{algorithm}[t]
\begin{algorithmic}[1]

\protect\caption{An iterative algorithm to solve \eqref{eq:imcsi:1}}

\label{algo:proposed:DUAL:2}

\global\long\def\algorithmicrequire{\textbf{Initialization:}}

\REQUIRE  Set $n:=0$ and solve \eqref{ipcsi:ini3.m} to generate an  initial feasible point $\bigl(\mathbf{w}^{(n)},\widetilde{\mathbf{U}}^{(n)}, \boldsymbol{\alpha}^{(n)}\bigr)$ 

\REPEAT
\STATE Solve \eqref{eq:imcsi:3} to obtain the optimal solution: $\bigl(\mathbf{w}^{*},\widetilde{\mathbf{U}}^{*}, \boldsymbol{\alpha}^{*})$.

\STATE Update\ $\mathbf{w}^{(n+1)}:=\mathbf{w}^{*}$,  $\widetilde{\mathbf{U}}^{(n+1)}:=\widetilde{\mathbf{U}}^{*}$, and $\boldsymbol{\alpha}^{(n+1)} := \boldsymbol{\alpha}^{*}$.

\STATE Set $n:=n+1.$
\UNTIL Convergence or maximum required number of iterations\\
\end{algorithmic} \end{algorithm}

\textit{Complexity Analysis}: The optimization problem in \eqref{eq:imcsi:3}  involves $GL$ LMI constraints of size $N+2$, $L$ LMI constraints of size $N+1$, and 2 LMI constraints of size $N$. Since the major complexity of solving  \eqref{eq:imcsi:3} comes from LMI constraints, we ignore the complexity of the constraints of lower sizes and they will not affect the complexity order of the whole problem. As a result, in each iteration of Algorithm \ref{algo:proposed:DUAL:2}, the worst-case computational complexity for solving the generic convex problem in \eqref{eq:imcsi:3} using interior point methods is given by $\mathcal{O}\Bigl(n\sqrt{GL(N+2) + L(N+1) + 2N}\bigl[GL(N+2)^3 + L(N+1)^3 + 2N^3 + nGL(N+2)^2 + nL(N+1)^2 + 2nN^2 +n^2\bigl]\Bigl)$, where $n = G(L+3)+N(N+G)+2L+6$ \cite{Ben:2001}.

\section{Numerical Results And Discussions}\label{Numerical}
In this section, we use simulations to evaluate the performance of the proposed approach.  The number of groups of SUs is set to $G=2$,  each of which consists of two SR users, i.e., $M_g=2,\,\forall g$. The number of  PRs is set to $L=2$, and each group of SUs and PUs is surrounded by two Eves, i.e., $K_p = K_g = 2$. All  channel entries are assumed to be i.i.d. complex Gaussian random variables with $\mathcal{CN}(0,1)$, and the background thermal noise at each user is generated as i.i.d. complex Gaussian random variables  with zero means and unit variance. The transmit power at the PT is fixed to $P_p=20$ dBm. For simplicity, we further assume that the minimum secrecy rate requirement for all PUs are the same, i.e., $\bar{R}_{p,l}=\bar{R}_{p}, \forall l$.  For the imperfect CSI of the PU channels, we define the normalized channel estimation errors as $\bar{\delta}^2_l=\delta^2_l/\|\mathbf{f}_l\|^2 =5\%$, $\forall l$. To guarantee secure communications, we choose $\tilde{\epsilon}=0.99$ and $\epsilon_g=0.99,\,\forall g$ for the passive Eves.
The results obtained in this paper are referred to as the proposed optimal scheme.
We also compare the performance of the proposed scheme with the known solutions, namely, the ``No JN scheme'' \cite{Yiyang,Pei} and ``Partial ZF (zero-forcing) scheme'' \cite{Nguyen:TIFS:16}. In the ``No JN scheme,''  the optimal solution   can  be obtained  by setting $\mathbf{U}$  to $\mathbf{0}$. In the ``Partial ZF scheme,'' we consider the  null space approach at the ST.  First of all, the JN is transmitted to all  Eves and to avoid interfering with both  PUs and SUs  as
\begin{equation}
\mathbf{U}^{H}\mathbf{f}_l=0, \forall l\quad \mbox{and}\quad  \mathbf{U}^{H}\mathbf{h}_{m_g}=0, \forall m_g, g.
\label{eq:zf1}
\end{equation}
In a CRN, the primary system should have higher priority, and thus the transmitted information at the ST should not generate interferences to the PUs  as
\begin{equation}
\mathbf{w}^{H}_g\mathbf{f}_l=0, \forall  l, g.
\label{eq:zf2}
\end{equation}
To simplify the problem, we enforce the information transmitted at the ST so that it should not introduce interference to  other groups as
\begin{equation}
\quad\mathbf{w}^{H}_g\mathbf{h}_{m_i}=0, \forall  i\neq g.
\label{eq:zf3}
\end{equation}
It is evident that $\Gamma_{p,l}=\frac{P_p|h_l|^2}{\sigma^2_l},  \forall l,$ does not depend on $\mathbf{w}_g$ and $\mathbf{U}$. So, we utilize \eqref{eq:zf1}, \eqref{eq:zf2}, and \eqref{eq:zf3} into $\mathbf{P1}$ to obtain the optimal solution for ``Partial ZF scheme.''  To solve convex problems we 
use the SDPT3  as the internal solver \cite{Toh} in MATLAB environment. The results of the secrecy rate  are shown by averaging over 1,000 simulation trials.

\begin{figure}
    \begin{center}
    \begin{subfigure}[Convergence results of Algorithm \ref{algo:proposed:DUAL} for different numbers of antennas at the ST.]{
        \includegraphics[width=0.46\textwidth]{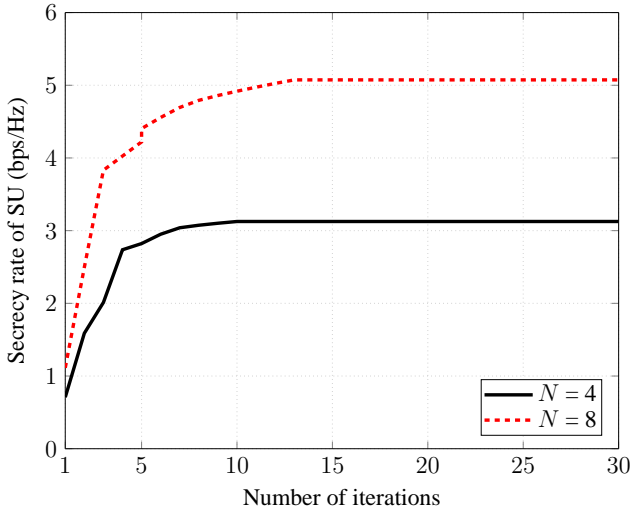}}
    		\label{fig:a}
				\end{subfigure}
				 \begin{subfigure}[Convergence results of Algorithm \ref{algo:proposed:DUAL:2} for different numbers of antennas at the ST.]{
        \includegraphics[width=0.46\textwidth]{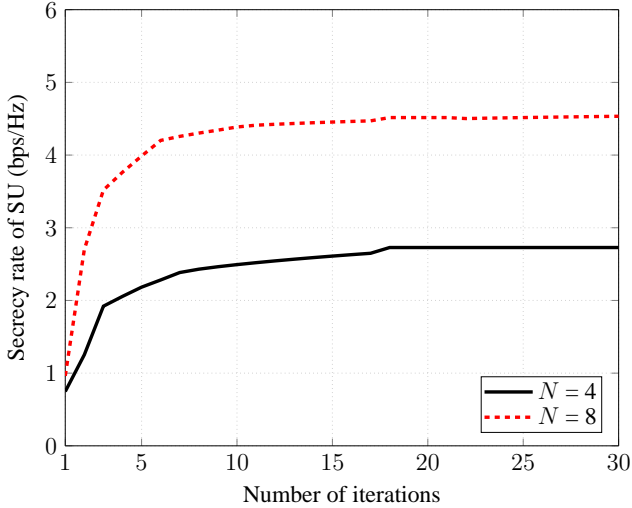}}
        \label{fig:b}
    \end{subfigure}
	  \caption{Convergence results of Algorithm \ref{algo:proposed:DUAL} and \ref{algo:proposed:DUAL:2} for different numbers of antennas at the ST over one random channel realization with $\bar{R}_{p} = 2$ bps/Hz and $P_s$ = 15 dBm.}\label{fig:Convergencebehavior:Iteration}
\end{center}
\end{figure}

Fig.~\ref{fig:Convergencebehavior:Iteration} illustrates the typical convergence behavior of the proposed Algorithm \ref{algo:proposed:DUAL}
 and Algorithm \ref{algo:proposed:DUAL:2}  as a function of the number of iterations with different numbers of antennas at the ST for Algorithm \ref{algo:proposed:DUAL} in Fig.~\ref{fig:Convergencebehavior:Iteration}(a) and for Algorithm \ref{algo:proposed:DUAL:2} in Fig.~\ref{fig:Convergencebehavior:Iteration}(b). As seen, the objective values of both algorithms increase rapidly within the first 10 iterations and stabilize  after a few more iterations, and its convergence rate is slightly sensitive to the problem size i.e., as $N$  increases. The convergence results also confirm that all optimization variables are accounted  to find a better solution for the next iteration, i.e., the secrecy rates of SUs  monotonically increasing. In addition, Fig.~\ref{fig:Convergencebehavior:Iteration} shows that at least $90\%$ of  secrecy rate is obtained when the proposed algorithms reach to 10 iterations. 

\begin{figure}[t]
\centering
\includegraphics[ width=0.48\textwidth]{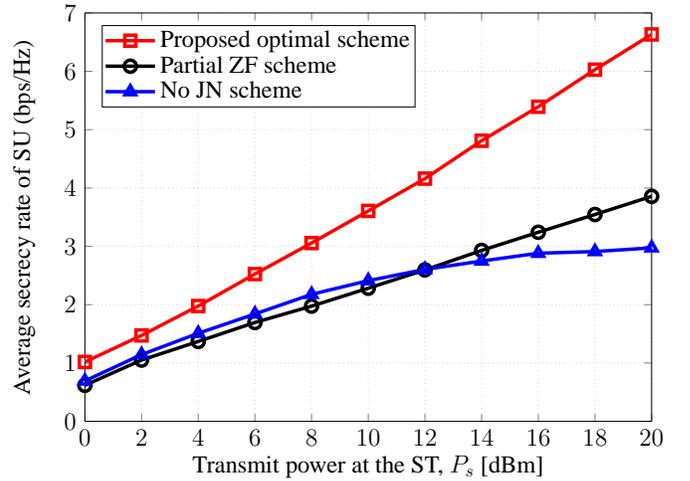}
\caption{ Average secrecy rate of the secondary system  vs. the transmit power at the ST with perfect CSI, where $\bar{R}_{p} = 2$ bps/Hz and $N = 8$.}
\label{fig:Fig3:Ps}
\end{figure}

Fig.~\ref{fig:Fig3:Ps} plots the average secrecy rate of  secondary
system versus the transmit power at the ST. As can be seen, the proposed optimal
scheme  greatly improves the secrecy rate of the ``Partial ZF scheme'' and ``No JN scheme,'' especially in high power regime.
The performance gain is thus achieved as a result of more intelligent interference management than that of other 
schemes for primary users and  Eves. Another interesting observation is that the ``No JN scheme'' outperforms the ``Partial
ZF scheme'' in low power regime ($P_s \leq 12$ dBm), but it tends to saturate when the transmit power becomes high. This is 
mainly due to the fact that, in high power regime, the ST needs to scale down the transmit power to maintain the secrecy rate of the primary
system, which results in a loss of the secrecy rate of  secondary system. Moreover, the simulation results in Fig.~\ref{fig:Fig3:Ps}
further confirm that incorporating  information and JN beamforming is a powerful means to transmit with  full power.

\begin{figure}[t]
\centering
\includegraphics[ width=0.48\textwidth]{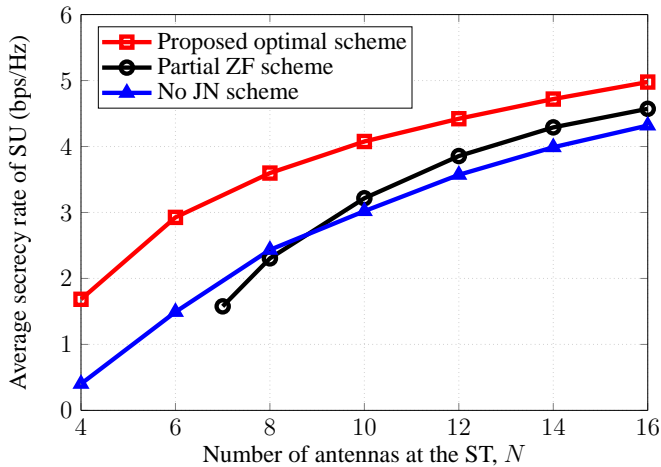}
\caption{ Average secrecy rate of the secondary system  vs. the number of transmit antennas at the ST with  perfect CSI,  where  $\bar{R}_{p} = 2$ bps/Hz and $P_s$ = 10 dBm.}
\label{fig:Fig4:N}
\end{figure}

In Fig.~\ref{fig:Fig4:N}, we study the secrecy rate of  secondary system as a function of the number of transmit antennas at the ST, $N$. The results show that the achievable secrecy rate increases as the number of transmit antennas increases in all schemes, since more degrees of freedom are added to the ST.  The proposed optimal scheme still achieves a better performance than  other schemes in all the range of $N$. We note that the optimal solution for the ``Partial ZF scheme'' is infeasible when $N < 7$ because for the ``Partial ZF scheme,''  interference among  legitimate users cannot be completely canceled out with insufficient number of transmit antennas. As expected, the gap between the proposed scheme and ``Partial ZF scheme'' is reduced as a result of providing more degrees of freedom.

\begin{figure}
    \centering
    \begin{subfigure}[Average secrecy rate of the secondary system for  different schemes, where ${P_s}$ = 10 dBm.]{
        \includegraphics[width=0.46\textwidth]{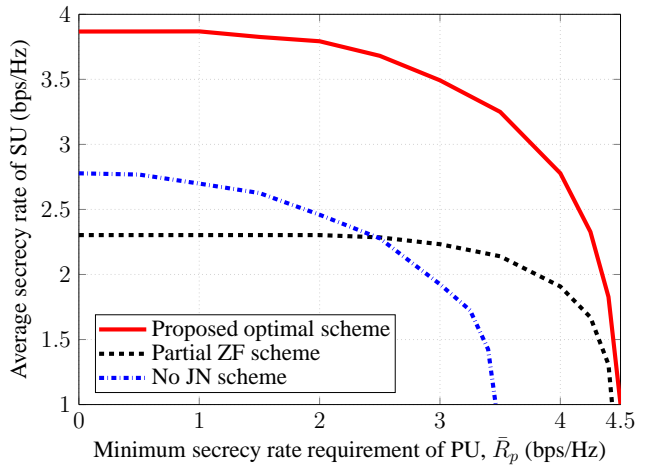}}
    		\label{fig:a:PU}
				\end{subfigure}
		\hfill
    \begin{subfigure}[Average secrecy rate of the secondary system for  different power sharing, where ${P_s}$ = 20 dBm]{
        \includegraphics[width=0.46\textwidth]{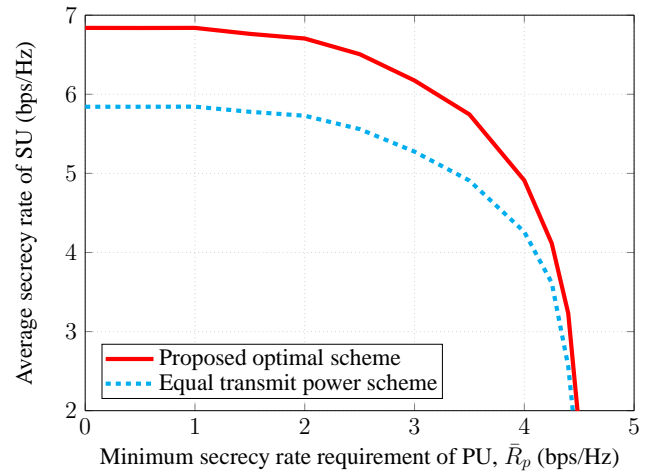}}
        \label{fig:b:PU}
    \end{subfigure}
	  \caption{Average secrecy rate of the secondary system vs. the minimum secrecy rate requirement of the primary system, (a) for  different schemes and (b) for  different  power sharing with perfect CSI, where $N$ = 8.}\label{fig:Serecyrate_vs_PU}
\end{figure}

The average secrecy rate of the secondary system is investigated as a function of  the minimum secrecy rate requirement of  primary system, $\bar{R}_p$, in Fig.~\ref{fig:Serecyrate_vs_PU}(a) for  different schemes and in Fig.~\ref{fig:Serecyrate_vs_PU}(b) for different  power sharing. As can be seen from Fig.~\ref{fig:Serecyrate_vs_PU}(a), the secondary system achieves a higher secrecy rate with the proposed optimal scheme than with  other schemes. Notably, the performance of  ``No JN scheme'' is degraded significant as $\bar{R}_p$ increases. The main reason for such a case is that, the ST is required to cause  less interference to the PRs and transmit  high interference to degrade the Eves' channels, which results in  a significant loss of the secondary system's secrecy rate. The secrecy rate of the ``Partial ZF scheme'' is nearly unchanged when $\bar{R}_p$ increases and approaches that of the proposed optimal scheme for high  $\bar{R}_p$, since the ST does not cause any interference to the PRs. In Fig.~\ref{fig:Serecyrate_vs_PU}(b), we plot the average secrecy rate of the secondary system  for the proposed optimal scheme under  different assumption of sharing equally the resources, i.e., transmit power at the ST. Particularly,  the information and JN beamforming are assumed to share 50$\%$ of the power resource, i.e., $\sum\nolimits_{g=1}^G\|\mathbf{w}_g\|^2\leq P_s/2$ and  $\|\mathbf{U}\|^2\leq P_s/2$. As seen, the proposed joint information and JN beamforming offers better performance compared to that of the equal transmit power scheme. However, the gap between the schemes diminishes for high secrecy rate of the primary system. {\color{black}The reason for this is two-fold: 1) For small $\bar{R}_p$, a small portion of JN already fulfills the QoS requirement of PU, and there is no need to further waste power budget on JN; 2) For extremely stringent QoS requirement of PU, JN becomes crucial and so it is reasonable to allocate a significant part of the power budget to JN (i.e., nearly a half as shown in Fig.~\ref{fig:Serecyrate_vs_PU}(b)) to meet the QoS requirement}. From both Figs.~\ref{fig:Serecyrate_vs_PU}(a) and~\ref{fig:Serecyrate_vs_PU}(b), for high $\bar{R}_p$, the secondary system lacks degree of freedom for leveraging multiuser diversity.

\begin{figure}[t]
\centering
\includegraphics[ width=0.48\textwidth]{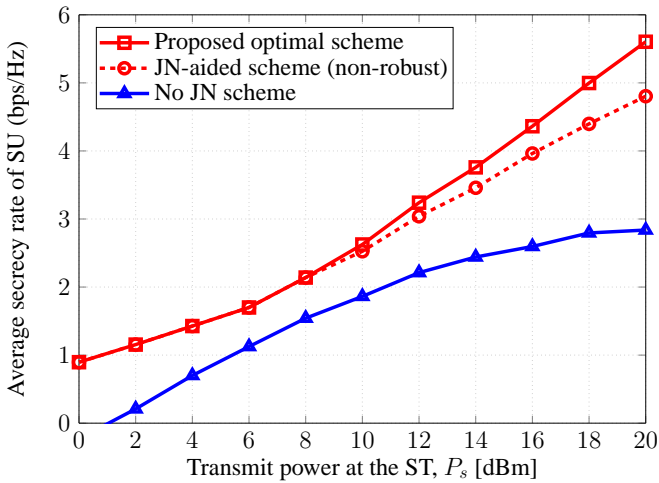}
\caption{ Average secrecy rate of the secondary system  vs. the transmit power at the ST with realistic scenario, where $\bar{R}_{p} = 1$ bps/Hz and $N = 8$.}
\label{fig:Fig6:Ps:IPCSI}
\end{figure}

We now turn our attention to illustrate the robustness of the proposed  design in  realistic scenario.
We also compare the performance of the proposed robust design to that of  non-robust secrecy rate. For the non-robust secrecy rate design, we use the presumed CSIs as $\hat{\mathbf{f}}_l,\ \forall l$ rather than the true ones, to perform the transmit design (as presented in Section IV), which then evaluates the resultant  secrecy rate. Fig.~\ref{fig:Fig6:Ps:IPCSI} depicts the  secrecy rate  as a function of the transmit power at the ST. As can be observed that the secrecy rate of non-robust design  is sensitive to the CSI uncertainties for high $P_s$. In particular, when $P_{s} \geq 8$ dBm, the non-robust design exhibits the degradation in terms of the secrecy rate that tends to worsen as $P_{s} $ increases. Moreover, the proposed optimal design achieves the best secrecy rate performance, compared to  other designs.

\begin{figure}
    \centering
    \begin{subfigure}[CDF of the secrecy rate of the secondary system for  different schemes, where ${\bar{R}_p}$ = 1 bps/Hz.]{
        \includegraphics[width=0.46\textwidth]{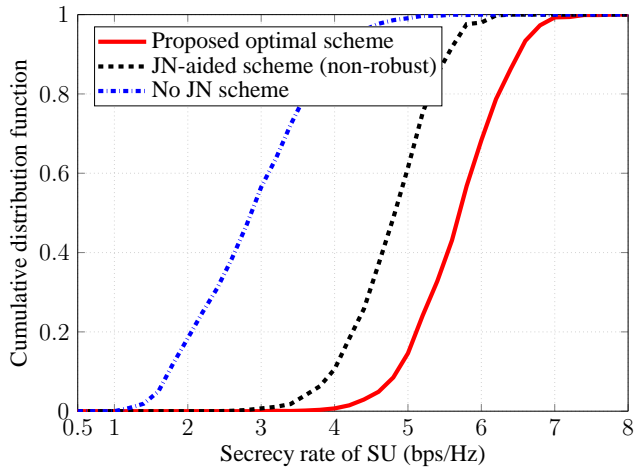}}
    		\label{fig:a:cdf}
				\end{subfigure}
		\hfill
    \begin{subfigure}[CDF of the secrecy rate of the secondary system for  different power sharing.]{
        \includegraphics[width=0.46\textwidth]{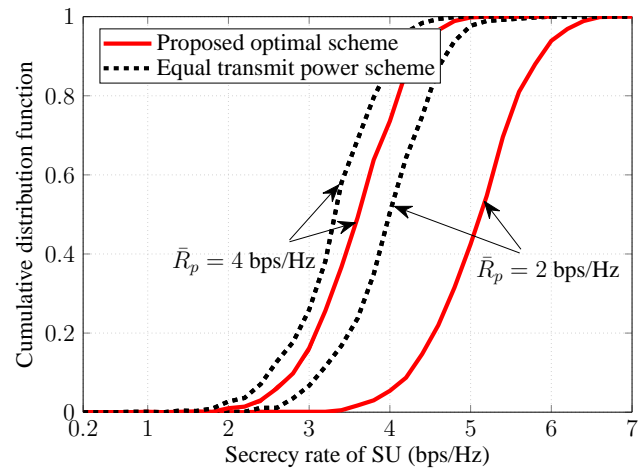}}
        \label{fig:b:cdf}
    \end{subfigure}
	  \caption{CDF of secrecy rate of the secondary system, {\color{black}the probability that the secrecy rate  will take a value less than or equal to a given secrecy rate threshold}, (a) for  different schemes and (b) for  different  power sharing with  realistic scenario, where $N$ = 8 and $P_s$ = 20 dBm.}\label{fig:CDF}
\end{figure}

Finally, we generate cumulative distribution functions (CDFs) of the secrecy rate of the secondary system in Fig.~\ref{fig:CDF}(a) for different schemes and in Fig.~\ref{fig:CDF}(b) for different power sharing. It is obvious in both CDFs that on account for a larger feasible set, the proposed optimal scheme can promise a bigger secrecy rate as expected. For instance, the proposed optimal scheme attains 0.8 bps/Hz and 2.8 bps/Hz of the achievable secrecy rate  higher than the non-robust scheme and ``No JN scheme,'' respectively, for approximately $60\%$ of the simulated trials in Fig.~\ref{fig:CDF}(a). For  large $\bar{R}_p$, the gap between the proposed design and non-robust design is reduced  as in Fig.~\ref{fig:CDF}(b) due to a decrease in the available multiuser diversity gain.

\section{Conclusion}\label{Conclusion}
In this paper, we have considered  PHY security for both  primary and secondary systems in the presence of  multiple secondary receiver groups and multiple primary receivers. The secondary system has been proposed to assist the primary system  by sending  jamming noise to degrade the decoding capability of the
 eavesdroppers. The main objective  is to maximize the secrecy rate of the secondary system, while the secondary transmitter is constrained not only by the power budget, but also by the individual minimum secrecy rate requirements of the primary users.  We have proposed iterative  algorithms to solve the  optimization problems. The idea of the proposed method is to approximate the nonconvex problem by a convex formulation in each iteration. We have proved that our iterative algorithms are guaranteed to monotonically converge to at least local optima of  the original nonconvex design problems. We have carried out  simulations  to evaluate the advantages of the proposed design. It has been shown that for a given initial feasible point, the proposed iterative algorithms are guaranteed to always converge to an optimal solution.

\section*{Appendix~A\\Proof of Lemma~\ref{LemmaPCSI}} \label{Appendix:A1}
The following inequalities play an important role in our developments:
\begin{IEEEeqnarray}{rCl}
&&\ln\Bigl(1+\frac{|x|^2}{y}\Bigr)\geq \ln\Bigl(1+\frac{|x^{(n)}|^2}{y^{(n)}}\Bigr)-\frac{|x^{(n)}|^2}{y^{(n)}}
+2\frac{\Re\{(x^{(n)})^*x\}}{y^{(n)}} \nonumber\\
&&\qquad\qquad\qquad\quad -\; \frac{|x^{(n)}|^2(|x|^2+y)}{y^{(n)}(y^{(n)}+|x^{(n)}|^2)}, \forall x\in\mathbb{C},  y>0,\label{inq1}\\
&&\frac{|x|^2}{y}\geq 2\frac{(x^{(n)})^*x}{y^{(n)}}-\frac{|x^{(n)}|^2}{(y^{(n)})^2}y,\  \forall x\in\mathbb{C},  y>0\label{inq2},\\
&&\ln(1+x) \leq \ln\bigl(1+x^{(n)}\bigr) + \frac{(x-x^{(n)})}{(1+x^{(n)})},\ \forall x\geq 0  \label{inq3}
\end{IEEEeqnarray}
where \eqref{inq1} and \eqref{inq2} follow from the convexity of functions $\ln\bigl(1+|x|^2/y\bigr)$ and $|x|^2/y$ \cite{TuyBook,Ngetal17}, respectively; while \eqref{inq3} is a result of the concavity of function $\ln(1+x)$.

Let us treat the nonconvex constraint \eqref{eq:rew:2:b} first. As the first step, \eqref{eq:SINR:sr} is equivalently rewritten by
\begin{equation}\label{f1sinr1}
\Gamma_{s,m_g}(\mathbf{w},\mathbf{U})=\frac{|\mathbf{h}_{m_g}^{H}\mathbf{w}_g|^2}{\chi_{s,m_g}(\mathbf{w},\mathbf{U})}
\end{equation}
where
\begin{IEEEeqnarray}{rCl}
\chi_{s,m_g}(\mathbf{w},\mathbf{U})=\sum_{i=1, i\neq g }^G|\mathbf{h}_{m_g}^{H}\mathbf{w}_i|^2+ \|\mathbf{h}_{m_g}^{H}\mathbf{U}\|^2 \nonumber\\
+\;  P_p|f_{m_g}|^2+\sigma_{m_g}^2.\nonumber
\end{IEEEeqnarray}
From \eqref{f1sinr1}, it follows that
\begin{equation}\label{eq:srlog:1}
\ln\Bigr(1+\frac{|\mathbf{h}_{m_g}^{H}\mathbf{w}_g|^2}{\chi_{s,m_g}(\mathbf{w},\mathbf{U})}\Bigr) = - \ln\Bigr(1 - \frac{|\mathbf{h}_{m_g}^{H}\mathbf{w}_g|^2}{\chi_{s,m_g}(\mathbf{w},\mathbf{U})+|\mathbf{h}_{m_g}^{H}\mathbf{w}_g|^2}\Bigr). 
\end{equation}
From the fact that $0\leq\frac{|\mathbf{h}_{m_g}^{H}\mathbf{w}_g|^2}{\chi_{s,m_g}(\mathbf{w},\mathbf{U})+|\mathbf{h}_{m_g}^{H}\mathbf{w}_g|^2}\triangleq \Theta(\mathbf{w},\mathbf{U}) < 1$, the function $- \ln\bigr(1 - \Theta(\mathbf{w},\mathbf{U})\bigr)$ is jointly convex w.r.t. the involved variables \cite{Stephen}, which is useful for developing an approximate solution for \eqref{eq:srlog:1}. In particular, at feasible point $\bigl(\mathbf{w}^{(n)},\mathbf{U}^{(n)}\bigr)$, applying \eqref{inq1} yields
\begin{IEEEeqnarray}{rCl}
&&- \ln\Bigr(1 - \frac{|\mathbf{h}_{m_g}^{H}\mathbf{w}_g|^2}{\chi_{s,m_g}(\mathbf{w},\mathbf{U})+|\mathbf{h}_{m_g}^{H}\mathbf{w}_g|^2}\Bigr) \nonumber\\
&&\qquad\geq - \ln\Bigr(1 - \frac{|\mathbf{h}_{m_g}^{H}\mathbf{w}_g^{(n)}|^2}{\chi_{s,m_g}(\mathbf{w}^{(n)},\mathbf{U}^{(n)})+|\mathbf{h}_{m_g}^{H}\mathbf{w}_g^{(n)}|^2}\Bigr)\nonumber\\
&&\qquad\quad-\ \Gamma_{s,m_g}\bigr(\mathbf{w}^{(n)},\mathbf{U}^{(n)}\bigr)+2\frac{\Re\bigl\{(\mathbf{w}_g^{(n)})^H\mathbf{h}_{m_g}
\mathbf{h}_{m_g}^{H}\mathbf{w}_g\bigr\}}{\chi_{s,m_g}(\mathbf{w}^{(n)},\mathbf{U}^{(n)})}\nonumber\\
&&\qquad\quad-\  \frac{\Gamma_{s,m_g}\bigr(\mathbf{w}^{(n)},\mathbf{U}^{(n)}\bigr)\Bigl(\chi_{s,m_g}(\mathbf{w},\mathbf{U})+|\mathbf{h}_{m_g}^{H}\mathbf{w}_g|^2\Bigr)}{
\chi_{s,m_g}(\mathbf{w}^{(n)},\mathbf{U}^{(n)})+|\mathbf{h}_{m_g}^{H}\mathbf{w}_g^{(n)}|^2} \nonumber\\
&&\qquad:= \mathcal{F}_{m_g}^{(n)}(\mathbf{w},\mathbf{U}).
\label{eq:srlog:2}
\end{IEEEeqnarray}
Note that $\mathcal{F}_{m_g}^{(n)}(\mathbf{w},\mathbf{U})$ is concave and is global lower bound of $- \ln\bigr(1 - \Theta(\mathbf{w},\mathbf{U})\bigr)$. 
It implies that we can iteratively replace $- \ln\bigr(1 - \Theta(\mathbf{w},\mathbf{U})\bigr)$ by $\mathcal{F}_{m_g}^{(n)}(\mathbf{w},\mathbf{U})$ to achieve a convex approximation of \eqref{eq:rew:2:b} \cite{Marks:78}. Hence, by substituting \eqref{f1sinr1}, \eqref{eq:srlog:1}, and \eqref{eq:srlog:2} into \eqref{eq:rew:2:b}, we provide \eqref{eq:rew:2:b:equi}. 
To handling the constraint \eqref{eq:rew:2:c}, we equivalently rewrite $\Gamma_{e,k_g}(\mathbf{w},\mathbf{U})$ as
\begin{equation}\label{eq:sesinr:1}
\Gamma_{e,k_g}(\mathbf{w},\mathbf{U})=\frac{|\mathbf{g}_{k_g}^{H}\mathbf{w}_g|^2}{\chi_{e,k_g}(\mathbf{w},\mathbf{U})}
\end{equation}
where
\[
\chi_{e,k_g}(\mathbf{w},\mathbf{U})=\sum_{i=1,i\neq g}^G|\mathbf{g}_{k_g}^{H}\mathbf{w}_i|^2+\|\mathbf{g}_{k_g}^{H}\mathbf{U}\|^2+P_p|f_{k_g}|^2+\sigma_{k_g}^2.
\]
 The constraint \eqref{eq:rew:2:c} requires a tight upper bound of $\log_2\bigl(1+\Gamma_{e,k_g}(\mathbf{w},\mathbf{U})\bigl)$. Applying \eqref{inq3} yields
\begin{IEEEeqnarray}{rCl}\label{eq:sesinr:3:temp1}
&&\ln\bigl(1+\Gamma_{e,k_g}(\mathbf{w},\mathbf{U})\bigl) \leq \log_2\bigl(1+\Gamma_{e,k_g}(\mathbf{w}^{(n)},\mathbf{U}^{(n)})\bigl)\nonumber\\
&&\qquad\qquad\quad +\; \bigl(1+\Gamma_{e,k_g}(\mathbf{w}^{(n)},\mathbf{U}^{(n)})\bigl)^{-1}\nonumber\\
&&\qquad\qquad\quad \times\Bigl(\frac{|\mathbf{g}_{k_g}^{H}\mathbf{w}_g|^2}{\chi_{e,k_g}(\mathbf{w},\mathbf{U})} - \Gamma_{e,k_g}(\mathbf{w}^{(n)},\mathbf{U}^{(n)})\Bigr).
\end{IEEEeqnarray}
Although the right-hand side of \eqref{eq:sesinr:3:temp1} is still nonconvex, it can be further convexified  by
\begin{IEEEeqnarray}{rCl}\label{eq:sesinr:4}
&&\mathcal{F}_{k_g}^{(n)}(\mathbf{w},\mathbf{U}) :=  \log_2\bigl(1+\Gamma_{e,k_g}(\mathbf{w}^{(n)},\mathbf{U}^{(n)})\bigl)\nonumber\\
&&\qquad\qquad\quad +\; \bigl(1+\Gamma_{e,k_g}(\mathbf{w}^{(n)},\mathbf{U}^{(n)})\bigl)^{-1}\nonumber\\
&&\qquad\qquad\quad \times\Bigl(\frac{|\mathbf{g}_{k_g}^{H}\mathbf{w}_g|^2}{\Phi_{k_g}^{(n)}(\mathbf{w},\mathbf{U})} - \Gamma_{e,k_g}(\mathbf{w}^{(n)},\mathbf{U}^{(n)})\Bigr)
\end{IEEEeqnarray}
where $\Phi_{k_g}^{(n)}(\mathbf{w},\mathbf{U})$ is the first-order approximation of $\chi_{e,k_g}(\mathbf{w},\mathbf{U})$  around the point $(\mathbf{w}^{(n)},\mathbf{U}^{(n)})$ by using \eqref{inq2}, which is given by
\begin{IEEEeqnarray}{rCl}
\Phi_{k_g}^{(n)}(\mathbf{w},\mathbf{U}) \triangleq&& \sum_{i=1,i\neq g}^G 2\Re\bigl\{(\mathbf{w}_i^{(n)})^H\mathbf{g}_{k_g}\mathbf{g}_{k_g}^{H}\mathbf{w}_i\bigr\}  \nonumber\  \\
&&-\; \sum_{i=1,i\neq g}^G|\mathbf{g}_{k_g}^{H}\mathbf{w}_i^{(n)}|^2 + 2\Re\bigl\{\mathbf{g}_{k_g}^{H}\mathbf{U}^{(n)}\mathbf{U}^H\mathbf{g}_{k_g} \bigr\}    \nonumber\\
 &&-\;\|\mathbf{g}_{k_g}^{H}\mathbf{U}^{(n)}\|^2 + P_p|f_{k_g}|^2+\sigma_{k_g}^2.       \nonumber
\end{IEEEeqnarray}
The constraint \eqref{eq:rew:2:c} is then approximated by the following convex constraint
\begin{IEEEeqnarray}{rCl}\label{eq:sesinr:3:temp}
\mathcal{F}_{k_g}^{(n)}(\mathbf{w},\mathbf{U}) &\leq&  t_g\ln2.
\end{IEEEeqnarray}

In a similar manner, at feasible point $\bigl(\mathbf{w}^{(n)},\mathbf{U}^{(n)}\bigr)$, the nonconvex constraints \eqref{eq:rew:2:d} and  \eqref{eq:rew:2:e} are approximated by the following convex constraints
\begin{IEEEeqnarray}{rCl}\label{eq:prsinr:11}
\mathcal{P}_{l}^{(n)}(\mathbf{w},\mathbf{U}) &\geq&  (z + \bar{R}_{p,l})\ln2\label{eq:prsinr:1a1},\\
\mathcal{P}_{k_p}^{(n)}(\mathbf{w},\mathbf{U}) &\leq&  z\ln2\label{eq:prsinr:1b1}
\end{IEEEeqnarray}
where $\mathcal{P}_{l}^{(n)}(\mathbf{w},\mathbf{U})$ and $\mathcal{P}_{k_p}^{(n)}(\mathbf{w},\mathbf{U})$ are respectively given by
\begin{IEEEeqnarray}{rCl}
&&\mathcal{P}_{l}^{(n)}(\mathbf{w},\mathbf{U}) :=
 \ln\Bigr(1 + \Gamma_{p,l}(\mathbf{w}^{(n)},\mathbf{U}^{(n)})\Bigr) 
+  \Gamma_{p,l}\bigr(\mathbf{w}^{(n)},\mathbf{U}^{(n)}\bigr)\nonumber\\
&&\qquad\quad -\  \Gamma_{p,l}(\mathbf{w}^{(n)},\mathbf{U}^{(n)})\frac{\left(\chi_{p,l}(\mathbf{w},\mathbf{U})+P_p|h_l|^2\right)}{\chi_{p,l}(\mathbf{w}^{(n)},\mathbf{U}^{(n)})+P_p|h_l|^2}\label{eq:prsinr:2},\\
&&\mathcal{P}_{k_p}^{(n)}(\mathbf{w},\mathbf{U}) :=
 \ln\Bigr(1 + \Gamma_{e,k_p}(\mathbf{w}^{(n)},\mathbf{U}^{(n)})\Bigr) \nonumber\\
&&\qquad\qquad\ +\;  \bigl(1+\Gamma_{e,k_p}(\mathbf{w}^{(n)},\mathbf{U}^{(n)})\bigr)^{-1}\nonumber\\
&&\qquad\qquad\  \times\Bigl(\frac{P_p|g_{k_p}|^2}{\Phi_{k_p}^{(n)}(\mathbf{w},\mathbf{U})}-\Gamma_{e,k_p}(\mathbf{w}^{(n)},\mathbf{U}^{(n)})\Bigr),\label{eq:prsinr:3}
\end{IEEEeqnarray}
with
\begin{IEEEeqnarray}{rCl}
&&\Phi_{k_p}^{(n)}(\mathbf{w},\mathbf{U}) = \sum_{g=1}^G 2\Re\{(\mathbf{w}_g^{(n)})^H\mathbf{f}_{k_p}\mathbf{f}_{k_p}^{H}\mathbf{w}_g\} - \sum_{g=1}^G|\mathbf{f}_{k_p}^{H}\mathbf{w}_g^{(n)}|^2 \nonumber\\
    &&\qquad\qquad\qquad +\; 2\Re\{\mathbf{f}_{k_p}^{H}\mathbf{U}^{(n)}\mathbf{U}^H\mathbf{f}_{k_p}\} - \|\mathbf{f}_{k_p}^{H}\mathbf{U}^{(n)}\|^2+ \sigma_{k_p}^2, \nonumber\\
&&\chi_{p,l}(\mathbf{w},\mathbf{U})=\sum\nolimits_{g=1}^G|\mathbf{f}_l^{H}\mathbf{w}_g|^2+ \|\mathbf{f}_l^{H}\mathbf{U}\|^2+\sigma_l^2,\nonumber\\
&&\chi_{e,k_p}(\mathbf{w},\mathbf{U})=\sum\nolimits_{g=1}^G|\mathbf{f}_{k_p}^{H}\mathbf{w}_g|^2+ \|\mathbf{f}_{k_p}^{H}\mathbf{U}\|^2+\sigma_{k_p}^2.\nonumber
\end{IEEEeqnarray}

\section*{Appendix~B\\Proof of Proposition~\ref{prop1}} \label{Appendix:A}
Let  $\varphi(\mathbf{w},\mathbf{U})$ and $\varphi^{(n)}(\mathbf{w},\mathbf{U})$ denote the objective of \eqref{eq:rew:2} and \eqref{eq:convexapp:1}, respectively.
We have
\begin{equation}
\varphi(\mathbf{w},\mathbf{U})\geq
\varphi^{(n)}(\mathbf{w},\mathbf{U}),\ \ (\text{thanks to \eqref{eq:srlog:2}})
\end{equation}
and
\begin{equation}
\varphi(\mathbf{w}^{(n)},\mathbf{U}^{(n)})=
\varphi^{(n)}(\mathbf{w}^{(n)},\mathbf{U}^{(n)}),\ \ (\text{thanks to \eqref{eq:srlog:3}}).
\end{equation}
Let $\bigr(\mathbf{w}^{(n+1)},\mathbf{U}^{(n+1)}\bigr)$ and $\bigr(\mathbf{w}^{(n)},\mathbf{U}^{(n)}\bigr)$ be the optimal solution and feasible point of \eqref{eq:convexapp:1}, respectively. It follows that
\begin{IEEEeqnarray}{rCl}
\varphi\bigr(\mathbf{w}^{(n+1)},\mathbf{U}^{(n+1)}\bigr)&\geq& \varphi^{(n)}\bigr(\mathbf{w}^{(n+1)},\mathbf{U}^{(n+1)}\bigr)\nonumber\\
&\geq&\varphi^{(n)}\bigr(\mathbf{w}^{(n)},\mathbf{U}^{(n)}\bigr)\nonumber\\
&=&\varphi\bigr(\mathbf{w}^{(n)},\mathbf{U}^{(n)}\bigr).\label{eq:appA}
\end{IEEEeqnarray}
 It shows that $\bigr(\mathbf{w}^{(n+1)},\mathbf{U}^{(n+1)}\bigr)$ is a better point to \eqref{eq:convexapp:1}  than $\bigr(\mathbf{w}^{(n)},\mathbf{U}^{(n)}\bigr)$ in the scene of improving the objective value. Furthermore, the sequence $\{\varphi^{(n)}\}$ is bounded above due to the power constraint in \eqref{eq:8c}. Let $\bigr(\bar{\mathbf{w}},\bar{\mathbf{U}}\bigr)$  be a saddle point of \eqref{eq:convexapp:1}, by Cauchy's theorem, there is a
convergent subsequence $\bigr\{\bigr(\mathbf{w}^{(n_{\kappa})},\mathbf{U}^{(n_{\kappa})}\bigr)\bigr\}$ satisfying
\begin{equation}\label{eq:appA:1}
\lim_{\kappa\rightarrow +\infty}\left[\varphi\bigr(\mathbf{w}^{(n_{\kappa})},\mathbf{U}^{(n_{\kappa})}\bigr)-
\varphi\bigr(\bar{\mathbf{w}},\bar{\mathbf{U}}\bigr)\right]=0.
\end{equation}
For every $n$ there is $\kappa$ such that $n_{\kappa}\leq n\leq n_{\kappa+1}$. From \eqref{eq:appA} and \eqref{eq:appA:1}, it is true that
\begin{IEEEeqnarray}{rCl}
0 &=&  \lim_{\kappa\rightarrow +\infty}\left[\varphi\bigr(\mathbf{w}^{(n_{\kappa})},\mathbf{U}^{(n_{\kappa})}\bigr)-
\varphi\bigr(\bar{\mathbf{w}},\bar{\mathbf{U}}\bigr)\right]\nonumber\\
&\leq& \lim_{n\rightarrow+\infty}\left[\varphi\bigr(\mathbf{w}^{(n)},\mathbf{U}^{(n)}\bigr)-
\varphi\bigr(\bar{\mathbf{w}},\bar{\mathbf{U}}\bigr)\right] \nonumber\\
& \leq& \lim_{\kappa\rightarrow+\infty} \left[\varphi\bigr(\mathbf{w}^{(n_{\kappa + 1})},\mathbf{U}^{(n_{\kappa +1})}\bigr)-
\varphi\bigr(\bar{\mathbf{w}},\bar{\mathbf{U}}\bigr)\right]\nonumber\\
& =& 0
\end{IEEEeqnarray}
which leads to $\underset{n\rightarrow+\infty}{\lim}\varphi\bigr(\mathbf{w}^{(n)},\mathbf{U}^{(n)}\bigr)=
\varphi\bigr(\bar{\mathbf{w}},\bar{\mathbf{U}}\bigr)$. In other words, Algorithm \ref{algo:proposed:DUAL} will stop when the following termination condition is met, i.e., 
\begin{equation}
\left|\left(\varphi\bigr(\mathbf{w}^{(n)},\mathbf{U}^{(n)}\bigr)-
\varphi\bigr(\bar{\mathbf{w}},\bar{\mathbf{U}}\bigr)\right)/\varphi\bigr(\bar{\mathbf{w}},\bar{\mathbf{U}}\bigr)\right|  \leq \epsilon
\end{equation}
where $\epsilon$ is a given accuracy. Following the same arguments as those in \cite[Theorem 1]{Marks:78}, we can prove that  each accumulation point $\bigr(\bar{\mathbf{w}},\bar{\mathbf{U}}\bigr)$ of the sequence $\bigr\{\bigr(\mathbf{w}^{(n)},\mathbf{U}^{(n)}\bigr)\bigr\}$ is a KKT-point  of \eqref{eq:problem_1}. Proposition \ref{prop1} is thus proved.

\section*{Appendix~C\\Proof of Lemma~\ref{lemma:PR}} \label{Appendix:B}
Since the channels are modeled as i.i.d. Rayleigh random variables, the constraint in \eqref{eq:imcsi:2:g} can be rewritten for each $k_p$ link as
\begin{IEEEeqnarray}{rCl}
&\Pr\Bigl(\frac{P_p|g_{k_p}|^2}{\sum_{g=1}^G\tr(\mathbf{F}_{k_p}\widetilde{\mathbf{W}}_g)+\tr(\mathbf{F}_{k_p}\widetilde{\mathbf{U}})+\sigma_{k_p}^2}\leq \beta\Bigr)\geq\tilde{\epsilon}\qquad\\
&\Leftrightarrow\Pr\Bigl(\frac{P_p}{\beta}|g_{k_p}|^2\leq\tr\bigl(\mathbf{F}_{k_p}\bigl(\sum_{g=1}^G\widetilde{\mathbf{W}}_g+\widetilde{\mathbf{U}}\bigr)\bigr)+\sigma_{k_p}^2\Bigr)\geq\tilde{\epsilon}\qquad
\label{eq:proba:pri:1}\end{IEEEeqnarray}
where $\mathbf{F}_{k_p} \triangleq \mathbf{f}_{k_p}\mathbf{f}_{k_p}^H$ and $\widetilde{\mathbf{W}}_g \triangleq \mathbf{w}_g\mathbf{w}_g^H$. It is very difficult to   calculate the distribution of $\tr\Bigl(\mathbf{F}_{k_p}\bigl(\sum_{g=1}^G\widetilde{\mathbf{W}}_g+\widetilde{\mathbf{U}}\bigr)\Bigr)$ directly. Instead of this,  we consider its lower bound. For notational simplicity, let us define $\mathbf{A}=\sum_{g=1}^G\widetilde{\mathbf{W}}_g+\widetilde{\mathbf{U}}$. In \cite{Lasserre},
\begin{equation}
\sum_{i=1}^N\lambda_i(\mathbf{F}_{k_p})\lambda_{N-i+1}(\mathbf{A})\leq\tr(\mathbf{F}_{k_p}\mathbf{A})
\label{eq:inequalityoftrace}\end{equation}
is shown  for $N\times N$ Hermitian matrices $\mathbf{F}_{k_p}$ and $\mathbf{A}$,
where $\lambda_i(\mathbf{X})$ denotes the $i$-th eigenvalue of matrix $\mathbf{X}\in\mathbb{H}^{N\times N}$, and its magnitude is sorted as $\lambda_{\max}(\mathbf{X})=\lambda_1(\mathbf{X})\geq\lambda_2(\mathbf{X})\geq\cdots\geq\lambda_N(\mathbf{X})=\lambda_{\min}(\mathbf{X})$. Since $\mathbf{F}_{k_p}$ is a rank-one positive semidefinite matrix, \eqref{eq:inequalityoftrace} can be written as
 \begin{IEEEeqnarray}{rCl}
\tr(\mathbf{F}_{k_p}\mathbf{A})&\geq&\lambda_1(\mathbf{F}_{k_p})\lambda_{N}(\mathbf{A})\nonumber\\
&=&\lambda_{\max}(\mathbf{F}_{k_p})\lambda_{\min}(\mathbf{A})\nonumber\\
&=&\tr(\mathbf{F}_{k_p})\lambda_{\min}(\mathbf{A}).
\label{eq:inequalityoftrace_revise}\end{IEEEeqnarray}
Substituting \eqref{eq:inequalityoftrace_revise} into \eqref{eq:proba:pri:1}, we get
\begin{IEEEeqnarray}{rCl}
&&\Pr\Bigl(\frac{P_p}{\beta}|g_{k_p}|^2\leq\tr\bigl(\mathbf{F}_{k_p}\bigl(\sum_{g=1}^G\widetilde{\mathbf{W}}_g+\widetilde{\mathbf{U}}\bigr)\bigr)+\sigma_{k_p}^2\Bigr)\nonumber\\
&&\geq\Pr\Bigl(\frac{P_p}{\beta}|g_{k_p}|^2\leq\tr(\mathbf{F}_{k_p})\lambda_{\min}(\mathbf{A})+\sigma_{k_p}^2\Bigr)\geq                            \tilde{\epsilon}.
\label{eq:proba:pri:2}\end{IEEEeqnarray}
Let $x=\tr(\mathbf{F}_{k_p})=\tr(|\mathbf{f}_{k_g}|^2)$. Then, $x$ follows a chi-squared distribution since $|\mathbf{f}_{k_g}|^2$ is a sum of  squares of $N$ independent Gaussian random variables. Correspondingly, the  probability density function (PDF) of $x$ is given as
$f_{X}(x)=\frac{e^{-x}x^{N-1}}{\Gamma(N)}.$
Let $y=\frac{P_p}{\beta}|g_{k_p}|^2$, and it then follows an exponential distribution with the PDF as  $f_{Y}(y)=\frac{\beta}{P_p}e^{-\frac{\beta}{P_p}y}.$
Therefore, the probability in \eqref{eq:proba:pri:2}  is obtained as 
\begin{IEEEeqnarray}{rCl}\label{eq:proba:pri:3}
&&\Pr\Bigl(y\leq x\lambda_{\min}(\mathbf{A})+\sigma_{k_p}^2\Bigr)\geq\tilde{\epsilon} \nonumber\\
&&\Leftrightarrow \int_{0}^{\infty}\int_{0}^{x\lambda_{\min}(\mathbf{A})+\sigma_{k_p}^2}f_X(x)f_Y(y)dydx\geq\tilde{\epsilon}\nonumber\\
&&\Leftrightarrow \int_{0}^{\infty}\Bigl(1-\exp\Bigl(-\frac{\beta}{P_p}(x\lambda_{\min}(\mathbf{A})+\sigma_{k_p}^2)\Bigr)\Bigr)f_X(x)dx\geq\tilde{\epsilon}\nonumber\\
&&\stackrel{(a)}{\Leftrightarrow}1-\exp\Bigl(-\frac{\beta}{P_p}\sigma_{k_p}^2\Bigr)\Bigl[\frac{\beta}{P_p}\lambda_{\min}(\mathbf{A})+1\Bigr]^{-N}\geq\tilde{\epsilon}
\end{IEEEeqnarray}
where $(a)$ is obtained using~\cite[Eq.~(3.351.3)]{Gradsh}. Next, the constraint in \eqref{eq:imcsi:2:g} for $K_p$ links is given as
\begin{equation}\begin{aligned}
&\eqref{eq:imcsi:2:g}\Leftrightarrow \\
&\prod_{k_p=1}^{K_p}\Pr\Bigl(\frac{P_p|g_{k_p}|^2}{\sum_{g=1}^G\tr(\mathbf{F}_{k_p}\widetilde{\mathbf{W}}_g)+\tr(\mathbf{F}_{k_p}\widetilde{\mathbf{U}})+\sigma_{k_p}^2}\leq \beta\Bigr)\geq\tilde{\epsilon}\\
&\stackrel{(b)}{\Leftrightarrow}1-\exp\Bigl(-\frac{\beta}{P_p}\sigma_{k_p}^2\Bigr)\Bigl[\frac{\beta}{P_p}\lambda_{\min}(\mathbf{A})+1\Bigr]^{-N}\geq\tilde{\epsilon}^{1/K_p}\\
&\Leftrightarrow\lambda_{\min}(\mathbf{A})\geq\Bigl[\exp\bigl(-\frac{\beta}{NP_p}\sigma_{k_p}^2\bigr)/(1-\tilde{\epsilon}^{1/K_p})^{1/N}-1\Bigl]\frac{P_p}{\beta}
\end{aligned}\label{eq:proba:pri:4}\end{equation}
where $(b)$ is obtained by combining \eqref{eq:proba:pri:3} since the channels of $K_p$ passive Eves are independent and modeled as i.i.d. random variables.

\section*{Appendix~D\\Proof of Lemma~\ref{lemma:SR}} \label{Appendix:C}

 The constraint in \eqref{eq:imcsi:2:c} can be rewritten for each $k_g$ link as
\begin{IEEEeqnarray}{rCl}\label{eq:proba:second:1}
&&\Pr\Bigl(P_p\phi_g|f_{k_g}|^2\geq \nonumber\\
&&\tr\bigl(\mathbf{G}_{k_g}\bigl(\widetilde{\mathbf{W}}_g-\phi_g\sum_{i=1,i\neq g}^G\widetilde{\mathbf{W}}_i-\phi_g\widetilde{\mathbf{U}}\bigl)\bigl)-\sigma_{k_g}^2\phi_g \Bigl)\geq\epsilon_{g}  \quad\end{IEEEeqnarray}
where $\mathbf{G}_{k_g}\triangleq \mathbf{g}_{k_g}\mathbf{g}_{k_g}^H$ for all $k_g$. For any given $N\times N$ Hermitian matrix  $\mathbf{B}$,  it follows from \cite{Lasserre} that
\begin{IEEEeqnarray}{rCl}\label{eq:inequalityoftrace:2}
\tr(\mathbf{G}_{k_g}\mathbf{B})&\leq&\sum_{i=1}^N\lambda_i(\mathbf{G}_{k_g})\lambda_{i}(\mathbf{B})\nonumber\\
&=&\lambda_{\max}(\mathbf{G}_{k_g})\lambda_{\max}(\mathbf{B})\nonumber\\
&=&\tr(\mathbf{G}_{k_g})\lambda_{\max}(\mathbf{B}).
\end{IEEEeqnarray}
Substituting \eqref{eq:inequalityoftrace_revise} and \eqref{eq:inequalityoftrace:2} into \eqref{eq:proba:second:1}, we have
\begin{IEEEeqnarray}{rCl}\label{eq:proba:second:2}
&&\Pr\Bigl(P_p\phi_g|f_{k_g}|^2\geq
\tr\bigl(\mathbf{G}_{k_g}\bigl(\widetilde{\mathbf{W}}_g-\phi_g\sum_{i=1,i\neq g}^G\widetilde{\mathbf{W}}_i \nonumber\\
&&\qquad\qquad\qquad\qquad\qquad -\;\phi_g\widetilde{\mathbf{U}}\bigl)\bigl)-\sigma_{k_g}^2\phi_g \Bigl)\nonumber\\
&&\geq \Pr\Bigl(P_p\phi_g|f_{k_g}|^2\geq\tr(\mathbf{G}_{k_g})\Bigl[\|\mathbf{w}_g\|^2-\phi_g\sum_{i=1,i\neq g}^G\|\mathbf{w}_i\|^2 \nonumber\\
&&\qquad\qquad\qquad\qquad -\;  \phi_g\lambda_{\min}(\widetilde{\mathbf{U}})  \Bigr]-\sigma_{k_g}^2\phi_g \Bigl)\geq\epsilon_{g}.
\end{IEEEeqnarray}
Following similar steps to the proof of Lemma \ref{lemma:PR}, we can obtain 
\begin{IEEEeqnarray}{rCl}
&&\frac{\|\mathbf{w}_g\|^2}{\phi_g}-\sum_{i=1,i\neq g}^G\|\mathbf{w}_i\|^2-\lambda_{\min}(\widetilde{\mathbf{U}})  \nonumber\\
&&\qquad\qquad \leq \Bigl[\exp\Bigl(\frac{\sigma_{k_g}^2}{NP_p}\Bigr)\epsilon_{g}^{-1/NK_g}-1\Bigr]P_p\nonumber\\
&&\Leftrightarrow \frac{\|\mathbf{w}_g\|^2}{\phi_g} \leq \Bigl[\exp\Bigl(\frac{\sigma_{k_g}^2}{NP_p}\Bigr)\epsilon_{g}^{-1/NK_g}-1\Bigr]P_p \nonumber\\
&&\qquad\qquad +\; \sum_{i=1,i\neq g}^G\|\mathbf{w}_i\|^2 + \lambda_{\min}(\widetilde{\mathbf{U}})
\end{IEEEeqnarray}
which completes the proof.
\bibliographystyle{IEEEtran}
\bibliography{IEEEfull}

\begin{thebibliography}{10}
\providecommand{\url}[1]{#1}
\csname url@samestyle\endcsname
\providecommand{\newblock}{\relax}
\providecommand{\bibinfo}[2]{#2}
\providecommand{\BIBentrySTDinterwordspacing}{\spaceskip=0pt\relax}
\providecommand{\BIBentryALTinterwordstretchfactor}{4}
\providecommand{\BIBentryALTinterwordspacing}{\spaceskip=\fontdimen2\font plus
\BIBentryALTinterwordstretchfactor\fontdimen3\font minus
  \fontdimen4\font\relax}
\providecommand{\BIBforeignlanguage}[2]{{%
\expandafter\ifx\csname l@#1\endcsname\relax
\typeout{** WARNING: IEEEtran.bst: No hyphenation pattern has been}%
\typeout{** loaded for the language `#1'. Using the pattern for}%
\typeout{** the default language instead.}%
\else
\language=\csname l@#1\endcsname
\fi
#2}}
\providecommand{\BIBdecl}{\relax}
\BIBdecl

\bibitem{NguyenICC}
V.-D. Nguyen, T.~Q. Duong, O.-S. Shin, A.~Nallanathan, and G.~K. Karagiannidis,
  ``Robust beamforming for secrecy rate in cooperative cognitive radio
  multicast communications,'' in \emph{Proc. IEEE Inter. Conf. Commun. (ICC)},
  Paris, France, May 2017, pp. 1--6.

\bibitem{Diffie}
W.~Diffie and M.~E. Hellman, ``New directions in cryptography,'' \emph{IEEE
  Trans. Inform. Theory}, vol.~22, no.~6, pp. 644--654, Nov. 1976.

\bibitem{Wyner}
A.~D. Wyner, ``The wire-tap channel,'' \emph{Bell System Tech. J.}, vol.~54,
  no.~8, pp. 1355--1387, Oct. 1975.

\bibitem{Cheong}
S.~K.~L.-Y. Cheong and M.~E. Hellman, ``The {G}aussian wire-tap channel,''
  \emph{IEEE Trans. Inform. Theory}, vol.~24, no.~4, pp. 451--456, July 1978.

\bibitem{Gopala}
P.~Gopala, L.~Lai, and H.~Gamal, ``On the secrecy capacity of fading
  channels,'' \emph{IEEE Trans. Inform. Theory}, vol.~54, no.~10, pp.
  4687--4698, Oct. 2008.

\bibitem{Tekin}
E.~Tekin and A.~Yener, ``The general {G}aussian multiple-access and two-way
  wiretap channels: Achievable rates and cooperative jamming,'' \emph{IEEE
  Trans. Inform. Theory}, vol.~54, no.~6, pp. 2735--2751, June 2008.

\bibitem{AnandTIFS10}
S.~Anand and R.~Chandramouli, ``On the location of an eavesdropper in
  multiterminal networks,'' \emph{IEEE Trans. Inform. Forensics $\&$ Security},
  vol.~5, no.~1, pp. 148--157, Mar. 2010.

\bibitem{Mukherjee_1}
A.~Mukherjee and A.~L. Swindlehurst, ``Utility of beamforming strategies for
  secrecy in multiuser {MIMO} wiretap channels,'' in \emph{Proc. Annual
  Allerton Conf. Commun., Control, and Comput.}, Monticello, IL, Oct. 2009, pp.
  1134--1141.

\bibitem{Mukherjee_2}
A.~Mukherjee and A.~L. Swindlehurst, ``Robust beamforming for security in
  {MIMO} wiretap channels with imperfect {CSI},'' \emph{IEEE Trans. Signal
  Process.}, vol.~59, no.~1, pp. 351--361, Jan. 2011.

\bibitem{Liao}
W.-C. Liao, T.-H. Chang, W.-K. Ma, and C.-Y. Chi, ``{QoS}-based transmit
  beamforming in the presence of eavesdroppers: An optimized artificial
  noise-aided approach,'' \emph{IEEE Trans. Signal Process.}, vol.~59, no.~3,
  pp. 1202--1216, Mar. 2011.

\bibitem{Li}
Q.~Li and W.~K. Ma, ``Spatially selective artificial-noise aided transmit
  optimization for {MISO} multi-eves secrecy rate maximization,'' \emph{IEEE
  Trans. Signal Process.}, vol.~61, no.~10, pp. 2704--2717, May 2013.

\bibitem{Lin_13}
P.-H. Lin, S.-H. Lai, S.-C. Lin, and H.-J. Su, ``On secrecy rate of the
  generalized artificial-noise assisted secure beamforming for wiretap
  channels,'' \emph{IEEE J. Select. Areas Commun.}, vol.~31, no.~9, pp.
  1728--1740, Sept. 2013.

\bibitem{Gerbracht}
S.~Gerbracht, C.~Scheunert, and E.~A. Jorswieck, ``Secrecy outage in {MISO}
  systems with partial channel information,'' \emph{IEEE Trans. Inform.
  Forensics $\&$ Security}, vol.~7, no.~2, pp. 704--716, Apr. 2012.

\bibitem{Zhang}
X.~Zhang, X.~Zhou, and M.~R. McKay, ``On the design of artificial noise-aided
  secure multi-antenna transmission in slow fading channels,'' \emph{IEEE
  Trans. Veh. Technol.}, vol.~62, no.~5, pp. 2170--2181, June 2013.

\bibitem{Romero}
N.~Romero-Zurita, M.~Ghogho, and D.~McLernon, ``Outage probability based power
  distribution between data and artificial noise for physical layer security,''
  \emph{IEEE Commun. Lett.}, vol.~19, no.~2, pp. 71--74, Feb. 2012.

\bibitem{Yan}
S.~Yan, G.~Geraci, N.~Yang, R.~Malaney, and J.~Yuan, ``On the target secrecy
  rate for {SISOME} wiretap channels,'' in \emph{Proc. IEEE Inter. Conf.
  Commun. (ICC)}, Sydney, Australia, June 2014, pp. 1--6.

\bibitem{Yang_14}
N.~Yang, M.~Elkashlan, T.~Q. Duong, J.~Yuan, and R.~Malaney, ``Optimal
  transmission with artificial noise in {MISOME} wiretap channels,'' \emph{IEEE
  Trans. Veh. Technol.}, vol.~65, no.~4, pp. 2170--2181, Apr. 2016.

\bibitem{Goel}
S.~Goel and R.~Negi, ``Guaranteeing secrecy using artificial noise,''
  \emph{IEEE Trans. Wireless Commun.}, vol.~7, no.~6, pp. 2180--2189, June
  2008.

\bibitem{Zhou}
X.~Zhou and M.~R. McKay, ``Secure transmission with artificial noise over
  fading channels: Achievable rate and optimal power allocation,'' \emph{IEEE
  Trans. Veh. Technol.}, vol.~59, no.~8, pp. 3831--3842, Oct. 2010.

\bibitem{Zhu}
F.~Zhu, F.~Gao, M.~Yao, and H.~Zou, ``Joint information and jamming beamforming
  for physical layer security with full duplex base station,'' \emph{IEEE
  Trans. Signal Process.}, vol.~62, no.~24, pp. 6391--6401, Dec. 2014.

\bibitem{Zheng}
G.~Zheng, I.~Krikidis, J.~Li, A.~P. Petropulu, and B.~Ottersten, ``Improving
  physical layer secrecy using full-duplex jamming receivers,'' \emph{IEEE
  Trans. Signal Process.}, vol.~61, no.~20, pp. 4962--4974, Oct. 2013.

\bibitem{Nguyen:TIFS:16}
V.-D. Nguyen, T.~Q. Duong, O.~A. Dobre, and O.-S. Shin, ``Joint information and
  jamming beamforming for secrecy rate maximization in cognitive radio
  networks,'' \emph{IEEE Trans. Inform. Forensics $\&$ Security}, vol.~11,
  no.~11, pp. 2609--2633, Nov. 2016.

\bibitem{Pei}
Y.~Pei, Y.~C. Liang, L.~Zhang, K.~C. Teh, , and K.~H. Li, ``Secure
  communication in multiantenna cognitive radio networks with imperfect channel
  state information,'' \emph{IEEE Trans. Signal Process.}, vol.~59, no.~4, pp.
  1683--1693, Apr. 2011.

\bibitem{Yiyang}
Y.~Pei, Y.~C. Liang, L.~Zhang, K.~C. Teh, and K.~H. Li, ``Secure communication
  over {MISO} cognitive radio channels,'' \emph{IEEE Trans. Wireless Commun.},
  vol.~9, no.~4, pp. 1494--1502, Apr. 2010.

\bibitem{Gabry}
F.~Gabry, N.~Schrammar, M.~Girnyk, L.~Nan, R.~Thobaben, and L.~K. Rasmussen,
  ``Cooperation for secure broadcasting in cognitive radio networks,'' in
  \emph{Proc. IEEE Inter. Conf. Commun. (ICC)}, Ottawa, ON, June 2012, pp.
  5613--5618.

\bibitem{He}
Y.~Y. He, J.~Evans, and S.~Dey, ``Secrecy rate maximization for cooperative
  overlay cognitive radio networks with artificial noise,'' in \emph{Proc. IEEE
  Inter. Conf. Commun. (ICC)}, Sydney, Australia, June 2014, pp. 1663--1668.

\bibitem{Nguyen}
V.-D. Nguyen, T.~M. Hoang, and O.-S. Shin, ``Secrecy capacity of the primary
  system in a cognitive radio network,'' \emph{IEEE Trans. Veh. Technol.},
  vol.~64, no.~8, pp. 3834--3843, Aug. 2015.

\bibitem{Zhu:VT:15}
F.~Zhu and M.~Yao, ``Improving physical layer security for {CRNs} using
  {SINR}-based cooperative beamforming,'' \emph{IEEE Trans. Veh. Technol.},
  vol.~65, no.~3, pp. 1835--1841, Mar. 2016.

\bibitem{Nguyen_15}
V.-D. Nguyen, T.~Q. Duong, and O.-S. Shin, ``Physical layer security for
  primary system: A symbiotic approach in cooperative cognitive radio
  networks,'' in \emph{Proc. IEEE Global Commun. Conf. (GLOBECOM)}, San Diego,
  CA, USA, Dec. 2015, pp. 1--6.

\bibitem{LiangIT09}
Y.~Liang, A.~Somekh-Baruch, H.~V. Poor, S.~Shamai, and
  S.~Verd$\acute{\text{u}}$, ``Capacity of cognitive interference channels with
  and without secrecy,'' \emph{IEEE Trans. Inform. Theory}, vol.~55, no.~2, pp.
  604--619, Feb. 2009.

\bibitem{BafghiISITTA10}
H.~G. Bafghi, S.~Salimi, B.~Seyfe, and M.~R. Aref, ``Cognitive interference
  channel with two confidential messages,'' in \emph{Proc. Inter. Symp. Inform.
  Theory Appl. (ISITA)}, Taichung, Taiwan, Oct. 2010, pp. 952--956.

\bibitem{FarsaniISIT14}
R.~K. Farsani and R.~Ebrahimpour, ``Capacity theorems for the cognitive radio
  channel with confidential messages,'' in \emph{Proc. IEEE Inter. Symp.
  Inform. Theory (ISIT)}, Honolulu, HI, USA, June 2014, pp. 1416--1420.

\bibitem{Mokari}
N.~Mokari, S.~Parsaeefard, H.~Saeedi, and P.~Azmi, ``Cooperative secure
  resource allocation in cognitive radio networks with guaranteed secrecy rate
  for primary users,'' \emph{IEEE Trans. Wireless Commun.}, vol.~13, no.~2, pp.
  1058--1073, Feb. 2014.

\bibitem{Gamal}
A.~E. Gamal and Y.-H. Kim, \emph{Network Information Theory}.\hskip 1em plus
  0.5em minus 0.4em\relax Cambridge Univ. Press, UK, 2011.

\bibitem{Lian:EUR:09}
Y.~Liang, G.~Kramer, H.~V. Poor, and S.~Shamai, ``Compound wiretap channels,''
  \emph{EURASIP J. Wireless Commun. Networks}, no.~5, pp. 1--12, Mar. 2009.

\bibitem{CsiszarIT78}
I.~Csiszar and J.~Korner, ``Broadcast channels with confidential messages,''
  \emph{IEEE Trans. Inform. Theory}, vol.~24, no.~3, pp. 339--348, May 1978.

\bibitem{Lui}
L.~Liu, R.~Zhang, and K.-C. Chua, ``Secrecy wireless information and power
  transfer with {MISO} beamforming,'' \emph{IEEE Trans. Signal Process.},
  vol.~62, no.~7, pp. 1850--1863, Apr. 2014.

\bibitem{Ng}
D.~W.~K. Ng, E.~S. Lo, and R.~Schober, ``Robust beamforming for secure
  communication in systems with wireless information and power transfer,''
  \emph{IEEE Trans. Wireless Commun.}, vol.~13, no.~8, pp. 4599--4615, Aug.
  2014.

\bibitem{Ben:2001}
A.~Ben-Tal and A.~Nemirovski, ``Lectures on modern convex optimization,''
  \emph{Philadelphia: MPS-SIAM Series on Optimi., SIAM}, 2001.

\bibitem{Stephen}
S.~Boyd and L.~Vandenberghe, \emph{Convex Optimization}.\hskip 1em plus 0.5em
  minus 0.4em\relax Cambridge Univ. Press, UK, 2007.

\bibitem{Tal:09}
A.~Ben-Tal, L.~E. Ghaoui, and A.~Nemirovski, \emph{Robust Optimization}.\hskip
  1em plus 0.5em minus 0.4em\relax Princeton Univ. Press, USA, 2009.

\bibitem{Marks:78}
B.~R. Marks and G.~P. Wright, ``A general inner approximation algorithm for
  nonconvex mathematical programs,'' \emph{Operations Research}, vol.~26,
  no.~4, pp. 681--683, Jul.-Aug. 1978.

\bibitem{Toh}
K.~C. Toh, M.~J. Todd, and R.~H. Tutuncu, ``{SDPT3}: {A} {M}atlab software
  package for semidefinite programming, version 1.3,'' \emph{Optimization
  Methods and Softw.}, vol.~11, pp. 545--581, Jan. 1999.

\bibitem{TuyBook}
H.~Tuy, \emph{Convex Analysis and Global Optimization}.\hskip 1em plus 0.5em
  minus 0.4em\relax Kluwer Academic, 2001.

\bibitem{Ngetal17}
V.-D. Nguyen, T.~Q. Duong, H.~D. Tuan, O.-S. Shin, and H.~V. Poor, ``Spectral
  and energy efficiencies in full-duplex wireless information and power
  transfer,'' \emph{IEEE Trans. Commun.}, vol.~65, no.~5, pp. 1--14, May 2017.

\bibitem{Lasserre}
J.~B. Lasserre, ``A trace inequality for matrix product,'' \emph{IEEE Trans.
  Autom. Control}, vol.~40, no.~8, pp. 1500--1501, Aug. 1995.

\bibitem{Gradsh}
I.~S. Gradshteyn and I.~M. Ryzhik, \emph{Tables of Integrals, Series, and
  Products}.\hskip 1em plus 0.5em minus 0.4em\relax 7th Ed., Academic Press,
  2007.

\end{thebibliography}

\end{document}